\def\IsExtended{0}
\newenvironment{extendedonly}{}{}
\newenvironment{notinextended}{}{}
\newenvironment{masteronly}{}{}
\newenvironment{notinmaster}{}{}
\newtheorem{definition}{Definition}
\newtheorem{theorem}[definition]{Theorem}
\newtheorem{lemma}[definition]{Lemma}
\newtheorem{corollary}[definition]{Corollary}
\newtheorem{example}[definition]{Example}
\theoremstyle{remark}
\newcommand{\shortrules}[6]{\noindent\begin{minipage}{#6ex}{\bfseries #1}\end{minipage} $\;$ #2 $\;\Rightarrow_{\text{#5}}\;$ #3 \par\smallskip\noindent #4}
\newcommand{\comp}{\operatorname{comp}} 
\newcommand{\mGnd}{\operatorname{gnd}} 
\newcommand{\mMGU}{\operatorname{mgu}} 
\newcommand{\dom}{\operatorname{dom}}
\newcommand{\cdom}{\operatorname{codom}}
\newcommand{\mSL}{SCL}
\newcommand{\SL}{{SCL}}
\theoremstyle{definition}
\newcommand{\st}{s.t.\ }
\newcounter{sidenote}
\newcommand{\MEMO}[1]{}
\newcommand{\mGndB}{\ensuremath{\mGnd^{\prec_B \beta}}}
\newcommand{\mrulename}[1]{\ensuremath{}\Rightarrow_{\text{SCL}}^{\text{#1}}}
\newcommand{\rulename}[1]{$\Rightarrow_{\text{SCL}}^{\text{#1}}$}
\title{SCL(FOL) Revisited}
\author{Martin Bromberger, Simon Schwarz, Christoph Weidenbach \\ \large{Max Planck Institute for Informatics} \\ \large{Saarland Informatics Campus} \\ \large{Saarbr\"ucken, Germany} \\ \large{\texttt{\{mbromber,sschwarz,weidenb\}@mpi-inf.mpg.de}} }
\begin{document}

\maketitle

\begin{abstract}

 This paper presents an up-to-date and refined version of the SCL calculus for first-order logic without equality.
 The refinement mainly consists of the following two parts:
 First, we incorporate a stronger notion of regularity into SCL(FOL). Our regularity definition is adapted from the SCL(T) calculus.
 This adapted definition guarantees non-redundant clause learning during a run of SCL. However, in contrast to the original presentation,
 it does not require exhaustive propagation.  
 Second, we introduce trail and model bounding to achieve termination guarantees. In previous versions, no termination guarantees about SCL were achieved. 
 Last, we give rigorous proofs for soundness, completeness and clause learning guarantees of SCL(FOL) and put SCL(FOL) into context of existing first-order calculi.
  
\end{abstract}

\section{Introduction} \label{sec:introduction}

First-order automated theorem proving is currently a core topic in automated reasoning.
Over the recent years, the development of first-order theorem provers has been very fruitful and had great impact in computer science and mathematics.
For instance, automation of first-order theorem proving provided powerful, fully automated tactics for interactive theorem provers. For example, the ``Sledgehammer''
automation tactic \cite{PaulsonB10,DesharnaisVBW22} provides a link between the interactive prover Isabelle and multiple first-order provers. With this automation tactic,
a significant percentage of nontrivial goals can be proven without interaction. Further research \cite{BlanchetteBN11} employs model-based automated reasoning to not only search for proofs,
but also extract counterexamples in case of falsifiable goals.  
Moreover, automated first-order theorem proving has many use cases in software verification. For instance, progress has been made in the verification of
so-called ``supervisor code'' \cite{BrombergerEtAl21DatalogHammer,BrombergerEtAl22SortedDatalogHammer} by translating properties directly to fragments of first-order logic.
The resulting first-order formula is then automatically checked for validity by a first-order theorem prover.

First-order logic is an important subject of research in automated theorem proving due to multiple beneficial properties of the logic:
First, the logic is very expressive. In particular, computable problems can always be translated to first-order logic.
Second, despite its expressiveness, reasoning in first-order logic can be automated. Last, first-order logic is fairly well-understood and intuitive.
Hence, many problems can be naturally encoded in first-order logic. Note that, in this work, we focus on first-order logic without equality.
SCL for first-order logic with equality is presented in~\cite{LeidingerWeidenbach22IJCAR}.

Overall, the wide applications of automated theorem proving and the advantages of first-order logic explain the demand for efficient first-order automated theorem provers.
This led to a multitude of different methods for first-order reasoning:   

\emph{Resolution} and \emph{Superposition} are the two traditional methods for first-order reasoning and both are semi-decision procedures for this logic.
However, they usually employ unguided learning. Hence, sophisticated heuristics need to be used in addition. Furthermore, resolution and
superposition can learn redundant clauses, e.g., clauses that are subsumed by existing clauses. Thus, explicit and costly checks for non-redundancy need to be implemented.

More recent approaches, such as the model-evolution calculus~\cite{BaumgartnerT08DPLLFirstOrder,BaumgartnerFuchsTinelli06} and model-driven conflict searches
in general~\cite{BonacinaEtAl15}, build a candidate model.
Conflicts to this candidate model then guide resolution steps, until either a (partial) model or a refutation is found.
This approach is motivated by the success of propositional model-building approaches in practice, such as DPLL~\cite{DavisLL62DPLL}
and CDCL (conflict-driven clause learning)~\cite{SilvaS96,BayardoSchrag97,MoskewiczMadiganZhaoZhangMalik01}.
In particular, the practical success of propositional CDCL motivates lifting this approach to first-order logic.

The SCL calculus (``Clause Learning from Simple Models'' or short ``Simple Clause Learning'') \cite{FioriWeidenbach19,BrombergerFW21,BrombergerEtAl21DatalogHammer,BrombergerSW22,LeidingerWeidenbach22IJCAR}
lifts a conflict-driven clause learning approach to first-order logic. In its first, original version~\cite{FioriWeidenbach19},
the focus of the calculus is deciding the Bernays-Schoenfinkel class without equality.
By extending this approach, however, SCL even yields a decision procedure for any class that enjoys the finite model property.
Moreover, SCL provides a sound and refutationally complete semi-decision procedure for general first-order logic without equality.
Subsequently, SCL has been extended to handle theories \cite{BrombergerEtAl2020arxiv,BrombergerFW21} and first-order logic with
equality~\cite{LeidingerWeidenbach22IJCAR}. This paper will revisit classical SCL without equality or theories.

SCL builds a partial ground candidate model on the trail with the \emph{model-building} rules. Those rules make use of ground literal decisions and ground literal propagations.
If a conflict between the ground trail and an instantiation of a clause is found, the \emph{conflict resolution} process starts. During this process,
resolution and factoring is applied to the conflict. Finally, a backtracking step learns a new non-redundant clause.

Even though all propagations and decisions on the trail are ground, learned clauses are non-ground in general.
Furthermore, it is guaranteed that a newly learned clause is non-redundant. We call a clause redundant with respect to a clause set if it is already implied by smaller clauses,
where the literal ordering induced by the trail literal ordering is used to compare clauses. 
In SCL, all learned clauses are always non-redundant with respect to our input clause set and previously learned clauses. In practice,
this implies that learned clauses do not need to be checked for forward redundancy when adding them to the learned clause set.
This makes clause learning in SCL a powerful primitive.

Overall, SCL interleaves the trail-building and conflict resolution rules until either a refutation or a model is found.
Similarly to resolution, a clause set is refuted if the empty clause can be learned. The refutational completeness of SCL guarantees that the empty clause is eventually learned for unsatisfiable inputs.
For satisfiable clause sets, SCL will build a model on the trail. In general, the bounded model on the
trail will be only \emph{partial}. 
This is in particular the case if the clause set has only infinite models.
Such a bounded partial model is not necessarily extendable to a complete model, a challenge that
motivates yet another extension of SCL, called HSCL~\cite{BrombergerSW22}. However, in this work, we present a version of SCL
that either proves unsatisfiability or constructs a single partial model on the trail.

In the original work on SCL \cite{FioriWeidenbach19}, two measures of regularity were defined, namely ``regular runs'' and ``weakly-regular runs''. In a regular run,
exhaustive propagation was required. However, in combination with the ground trail of SCL, this can already lead to exponentially many propagations.
For example, consider the following clause set:
$$N = \{R(x_1,\ldots,x_n,a,b), P \lor Q, P \lor\lnot Q, \lnot P \lor Q, \lnot P \lor\lnot Q\}$$
In the case of exhaustive unit propagation, this example requires propagating $2^n$ different ground instances of $R(x_1,\ldots,x_n,a,b)$ before being
able to refute the propositional logic clauses. In this example, it is clear that all ground instances of $R(x_1,\ldots,x_n,a,b)$ may be ignored.
In general, however, this may be not the case. Hence, this motivates a calculus without mandatory propagation in order to learn non-redundant clauses.

The original paper~\cite{FioriWeidenbach19} addressed this issue by defining weakly-regular runs. In such runs, unit propagation is treated differently and is not required to be exhaustive. However,
all non-unit propagations still need to be applied before any decision can be made. While this mitigates the exponential growth in the above example,
more complex examples still exhibit unnecessary exponential trail growth.

This motivates the use of a different regularity measure. The measure used in this work was first presented in \cite{BrombergerEtAl2020arxiv}.
As a key idea, regularity does no longer require exhaustive propagation. Instead, decisions are limited in a way that does not allow conflicts to arise directly after a decision is made.
We will show that this regularity, although weaker as the previous definition, is still sufficient to guarantee non-redundant learning (see Lemma~\ref{lemma:resolve_in_regular}).
Our contribution is to adapt the new regularity definition \cite{BrombergerEtAl2020arxiv} to the SCL calculus, which includes refining the proofs to the weaker definition of regularity.
Furthermore, the variant of the Backtracking rule, Section~\ref{sec:rules}, used in this paper is more explicit compared to \cite{BrombergerEtAl2020arxiv} because it jumps to
the minimal trail where the finally learned clause propagates.

In the original SCL paper \cite{FioriWeidenbach19}, no trail bounding was employed at all. Hence, termination is only guaranteed in the case of the underlying logic enjoying the finite model property.
Consequently, in the original version, the analysis focused on the impact of SCL for the Bernays-Schoenfinkel fragment, since problems from this fragment are guaranteed to have a finite model.

A contribution of this paper is to use a well-founded ordering $\prec_B$ and a limiting literal $\beta$ instead of a fixed set $B$. We require only finitely
many literals to be $\prec_B$-smaller than $\beta$ (a formal definition is in Section \ref{sec:prelim}). Now, as an invariant,
we will prove that all considered literals are $\prec_B$-smaller than $\beta$. Hence, only finitely many literals can occur during a SCL run.
Thus, we obtain guarantees about finite trails and model sizes, while avoiding the problems of a fixed set of constants. Overall,
this definition is used to guarantee termination for the bounded variant of SCL. Completeness is achieved via a new Grow rule that strictly increases
$\beta$, see Section~\ref{sec:rules}.

Lastly, this paper contains rigorous proofs that were omitted in the original papers \cite{FioriWeidenbach19,BrombergerEtAl2020arxiv,LeidingerWeidenbach22IJCAR} before,
as well as some simplified proofs. In particular, this paper includes a full soundness proof of SCL (Theorem~\ref{theo:scl:soundness}) and a rigorous correctness proofs (Theorem~\ref{theo:scl:corterm}).
Further proofs have been simplified, for example the complexity analysis in Theorem~\ref{theo:cslnonrednexptime} and the termination proof (Theorem~\ref{theo:finite-termination}).

The paper is now organized as follows.
First, in Section \ref{sec:related}, we compare four different approaches to first-order reasoning and put SCL into context.
To this end, we introduce classical resolution and superposition.
Afterwards, we compare these with more recent approaches, namely the model-evolution search and model-driven conflict searches.
In the preliminaries, Section~\ref{sec:prelim}, we then briefly introduce the formal notation and definitions used in the main part of this paper.
Next, we formally describe the classical SCL calculus for first-order logic without equality in Section \ref{sec:rules}. First, we give the abstract rewrite rules for the calculus.
These rules follow the ones presented in the original SCL paper \cite{FioriWeidenbach19}. However, the rules have been adapted to
(i)~respect a new clause size measure $\prec_B$ to limit the trail size and guarantee termination of the calculus
(ii)~directly incorporate our new definition of regularity. 
Then we give full proofs for the core properties of SCL with respect to the changed definition of the rules. In particular,
this includes soundness (Theorem \ref{theo:scl:soundness}), termination (Theorem \ref{theo:finite-termination}) and non-redundant learning (Theorem \ref{theo:SnnRed}).
Lastly, we will discuss our contributions and potential future work in Section \ref{sec:conclusion}.


\section{Related Work} \label{sec:related}

There has been intensive development on reasoning procedures for first-order logic without equality. In this section, we will compare four different approaches to first-order reasoning.

Traditionally, the work in first-order-logic automated reasoning has focused on proofs. From this perspective, we first introduce resolution and superposition,
both of which are refutation-based algorithms which can produce (refutation) proofs, but yield no easily accessible model information in general.
Hence, in a prover setting, extracting counterexamples from runs of those algorithms is not straight-forward.

Afterwards, we focus on model-based calculi. These introduce a candidate partial model. This model then guides reasoning, usually in
the form of finding an ordering in which resolution inferences are made. As examples for model-based calculi, we present the model-evolution
calculus and the class of model-driven conflict searches. The SCL calculus, which will be detailed after this section, implements a model-driven conflict search.

Resolution \cite{DavisP60Resolution,Robinson65,Slagle67} is one of the classical approaches to first-order theorem proving.
In the classical resolution calculus, two clauses can produces a new clause (called resolvent) by the resolution rule. For example,
the clause $P(x) \lor Q(x)$ together with the clause $\neg P(g(y))\lor R(y,y)$ produces the resolvent $Q(g(y))\lor R(y,y)$.
Furthermore, resolution requires the concept of \emph{factoring} for completeness. For example, a clause $R(x,y) \lor P(x) \lor P(g(y))$ can be factorized to $R(g(y),y) \lor P(g(y))$
by unification of $P(x)$ and $P(g(y))$. Exhaustive application of resolution and factoring modulo certain redundancy criteria is called \emph{saturation}.
The most important redundancy criteria in the resolution context are tautology deletion, i.e., deletion of clauses like $R(x,y)\lor P(x)\lor \neg P(x)$, and the
deletion of subsumed clauses. For example, the clause $P(x)\lor R(x,y)$ \emph{subsumes} the clause $P(x)\lor R(x,g(z)) \lor P(z) $ via the substitution $\{y\mapsto g(z)\}$
because $(P(x)\lor R(x,y))\{y\mapsto g(z)\}$ is a subset of $P(x)\lor R(x,g(z)) \lor P(z) $.
A clause set without equality can already be saturated by applying only the resolution and factoring rule. In a saturated clause set, no further non-redundant clause can be added.
 Resolution has \emph{refuted} the clause set if the empty clause can be derived. In particular, a clause set is unsatisfiable iff the empty clause is in the saturated clause set. 

\medskip\noindent
As an example, consider the following (unsatisfiable) clause set:
\begin{align*}
\{P(a) \lor Q(x), \hspace*{1em}P(f(x)) \lor \neg Q(x), \hspace*{1em}\neg P(x) \lor  Q(x), \hspace*{1em}\neg P(f(a)) \lor \neg Q(x)\}
\end{align*}
Here, a possible resolution refutation could look as follows:
\begin{align*}
    P(a) \lor Q(x) \text{~resolved with~} \neg P(x) \lor Q(x) &\leadsto Q(x) \lor Q(a) \\
    P(f(x)) \lor \neg Q(x) \text{~resolved with~} \neg P(f(a)) \lor \neg Q(x) &\leadsto \neg Q(a) \lor \neg Q(x) \\
    \text{Factoring on~}Q(x) \lor Q(a)\{x \mapsto a\} &\leadsto Q(a) \\
    \text{Factoring on~}\neg Q(a) \lor \neg Q(x)\{x \mapsto a\} &\leadsto \neg Q(a) \\
    Q(a) \text{~resolved with~} \neg Q(a) &\leadsto \bot
\end{align*}

Resolution is a refutation-based approach. Hence, to prove validity of a formula, a contradiction from the negation
of the statement must be derived. By the compactness of first-order logic, for an unsatisfiable clause set such a contradiction
is eventually found. This makes resolution a semi-decision procedure for first-order logic without equality.
However, in satisfiable cases, no direct approach for extracting a model from the resolution calculus exists. This is a drawback
compared to explicit model-building approaches, where (partial) models can be extracted easily. 

Furthermore, resolution employs \emph{unguided} learning. Multiple different inferences can be done at any step in the algorithm, and there are no limits with respect to generating new clauses.
While this enables resolution to potentially find optimal inferences which yield very short resolution proofs, there are several downsides to this approach:

First, not limiting inferences leaves the burden of choice to the prover. Hence, resolution-based solvers need a sophisticated
heuristic for priorizing inferences. This is one of the reasons that led to the development of superposition, which is discussed in the next section.

Second and more importantly, redundant clauses may be generated by resolution inferences. For example, consider the same clause set as in the initial example.
While we started with a reasonable inference in the initial example, a potential resolution run could also take the following inference:
\begin{align*}
    P(x) \lor Q(x) \text{~resolved with~} \neg P(x) \lor \neg Q(x) &\leadsto Q(x) \lor \neg Q(x)
\end{align*}

However, the generated clause does not contribute to any refutation, as it is a tautology. Later in this work, we will
formalize a similar concept called \emph{clause redundancy}. In the above example, resolution generated a redundant clause. In general, this can happen also for inferred clauses that are not tautological.
Hence, resolution-based provers must check for redundancy whenever a new clause is generated. This can have a significant
impact on the runtime of the prover, even though there is active development on efficient subsumption testing \cite{BrombergerLW22}.
In contrast, SCL avoids this checking entirely by providing native non-redundant learning guarantees. 
Resolution can be extended to handle equality by adding the equality axioms \cite{BachmairG98equationalreasoning}.
Further improvements, subsequently, have been made to exploit the equality axioms for generating smaller clauses. One such example is
superposition \cite{BachmairG90Superpos}, which refines the idea of paramodulation \cite{Robinson1983}. With this approach, equalites are oriented and equal atoms in terms can be replaced.

Superposition approaches \cite{BachmairG90Superpos,BachmairGW94,KruglovW12,BaumgartnerW19} limit the possible inferences of resolution by employing a fixed \emph{literal ordering}.
They were defined originally for first-order logic with equality and their projection to first-order logic without equality considered here is called \emph{ordered resolution}.
With respect to the literal  ordering, only resolution inferences on maximal literals are permitted.
This severely limits the amount of possible inferences, which can significantly speed up saturation. In contrast to resolution there is an abstract concept of redundancy:
a clause is redundant and can therefore be eliminated, if it is implied by smaller clauses with respect to the literal ordering.
The restriction of resolving only on maximal literals preserves refutational completeness.
However, even though there are fewer inferences possible, cases in which multiple inferences are available still occur frequently.
Hence, a heuristic for priorization is still necessary. Overall, the order in which inferences are made has a significant impact on the runtime.
Moreover, the performance of superposition heavily depends on the used term ordering~\cite{Dershowitz82,Schulz22TermOrderings}, as it dictates the order of possible inferences.
This poses a challenge for an implementation, as such orderings usually must be fixed before starting a superposition run. Furthermore, the problem of inferring redundant clauses persists in superposition.
Nonetheless, superposition-based solvers are state-of-the-art for first-order reasoning. During recent CASCs (CADE ATP System Competitions) \cite{Sutcliffe16} superposition-based provers
have dominated the first-order tracks \cite{DuarteK20iProver,Kovacs13Vampire,Schulz19EProver}. 
Similarly to resolution, superposition can be extended to handle first-order logic with equality. Most state-of-the-art provers include equational reasoning to support first-order logic with equality.


An alternative to resolution-based reasoning is formed by model-based approaches. This is motivated by the good practical performance of
DPLL~\cite{DavisLL62DPLL} in propositional logic.
In DPLL, the resolution technique is replaced by splitting on literals. On the respective literal, a case analysis is then performed:
the possibility of it being false or true are investigated separately. This can also be understood as ``guessing'' or deciding the specific literal.
The model-evolution calculus \cite{BaumgartnerT08DPLLFirstOrder} lifts this approach to first-order logic.
In the model-evolution calculus, contexts are introduced. A context always forms a candidate model for the input clauses. On conflicts, the model is ``evolved''
by splitting on conflicting literals. One key challenge is an efficient handling of clause splitting. In particular, in contrast to propositional logic,
a single first-order clause represents all its ground instances. Hence, when splitting $P(x)$, both a branch $P(x)$ (for $\forall x P(x)$)
and a branch $\neg P(c)$ (for $\neg \forall x P(x)$) need to be produced. In particular, note that a new (skolemized) variable is introduced in the existential case.
The model-evolution calculus introduces techniques to handle splitting efficiently, by storing additional information about existential variables in a context.

Another instance of a model-building reasoning approach is NRCL (Non-Redundant Clause Learning)~\cite{AlagiWeidenbach15}. Actually, this is the predecessor of SCL
with the main difference that model building in NRCL is more complex, because also non-ground literals are propagated. This then results in more complex algorithms
for propagation or false clause detection, similar to~\cite{PiskacEtAl10}.

Most propositional solvers currently use variants of propositional CDCL \cite{SilvaS96,BayardoSchrag97} instead of DPLL.
Motivated by the success of CDCL solvers in practice \cite{Biere08CDCLSolvers}, there are approaches of lifting such algorithms to the first-order case.
Such a lifting preserves the general ideas of CDCL: First, a \emph{trail} of literals is built.
The trail forms a candidate model and must always be consistent in itself.
The algorithm then assumes that all literals on the trail are true, whereas literals not on the trail are undefined.
Then, the trail is extended until either a model is found, or a \emph{conflict} arises. This happens when a clause becomes false under the literals in the trail.
Such a conflict is then \emph{resolved} until a new clause is learned.
Compared to superposition, this approach allows choosing resolution inferences without an a priori ordering. Instead, inferences are guided by the model assumptions on the trail.
In this approach, model extraction is straight-forward, as a (partial) model is always present on the trail in case of a non-active conflict.
This technique is employed by the SCL calculus.
Furthermore, there is active development on the first-order SGGS calculus (Semantically-Guided Goal-Sensitive Reasoning) \cite{BonacinaPlaisted16}, which also implements a model-driven search.
However, the strong non-redundant clause learning guarantees of SCL are not provided by SGGS.
Another key difference in the design of both calculi is that SCL employs a ground trail 
whereas SGGS can add non-ground literals to the trail. While this makes the trail representation of SGGS more compact and powerful,
this comes at the cost of more expensive unification steps for propagation, conflict detection and resolution. 

\section{Preliminaries} \label{sec:prelim}

The general notation and definitions follow the original SCL paper \cite{FioriWeidenbach19}.
We assume a first-order language without equality where
$N$ denotes a clause set;
$C, D$ denote clauses;
$L, K, H$ denote literals;
$A, B$ denote atoms;
$P, Q, R$ denote predicates;
$t, s$ terms;
$f, g, h$ function symbols;
$a, b, c$ constants;
and $x, y, z$ variables.
Atoms, literals, clauses and clause sets are considered as usual, where
in particular clauses are identified both with their disjunction and multiset
of literals.
The complement of a literal is denoted by the function $\comp$.
Semantic entailment $\models$ is defined as usual where variables in clauses
are assumed to be universally quantified.
Substitutions $\sigma, \tau$ are total mappings from variables to terms, where
$\dom(\sigma) := \{x \mid x\sigma\neq x\}$ is finite and $\cdom(\sigma) := \{ t\mid x\sigma = t, x\in\dom(\sigma)\}$.
Their application is extended to literals, clauses, and sets of such objects in the usual way.
A term, atom, clause, or a set of these objects is \emph{ground} if it does not contain any variable.
A substitution $\sigma$ is \emph{ground} if $\cdom(\sigma)$ is ground. A substitution $\sigma$ is \emph{grounding}
for a term $t$, literal $L$, clause $C$ if $t\sigma$, $L\sigma$, $C\sigma$ is ground, respectively.
The function $\mMGU$ denotes the \emph{most general unifier} of two terms, atoms, literals.
We assume that any $\mMGU$ of two terms or literals does not introduce any fresh variables and is idempotent.
A \emph{closure} is denoted as $C\cdot\sigma$ and is a pair of a clause $C$ and a grounding substitution $\sigma$.
The function $\mGnd$ returns the set of all ground instances of a literal, clause, or clause set with respect
to the signature of the respective clause set.


Let  $\prec$ denote a well-founded, total, strict ordering on
ground literals.
This ordering is then lifted to clauses and clause sets by its respective multiset extension. We
overload $\prec$ for literals, clauses, clause sets if the meaning is clear from the context.
The ordering is lifted to the non-ground case via instantiation: we define $C \prec D$
if for all grounding substitutions $\sigma$ it holds $C\sigma \prec D\sigma$.
We define $\preceq$ as the reflexive closure of $\prec$ and $N^{\preceq C} := \{D \mid D\in N \;\text{and}\; D\preceq C\}$.

\begin{definition}[Clause Redundancy] \label{prelim:def:redundancy}
  A ground clause $C$ is \emph{redundant} with respect to a ground clause
  set $N$ and an order $\prec$ if $N^{\prec C} \models C$ or $C\in N$.
  A clause $C$ is \emph{redundant}  with respect to a clause set
  $N$ and an order $\prec$  if for all $C' \in \mGnd(C)$ it holds that $C'$ is
  redundant with respect to $\mGnd(N)$.
\end{definition}

For the sake of bounding, let $\prec_B$ denote a  well-founded, total, strict ordering on
ground atoms such that for any ground atom $A$ there are only
finitely many ground atoms $A'$ with $A' \prec_B A$ \cite{BrombergerSW22}.
For example, an instance of such an ordering could be KBO~\cite{KnuthBendix70} without zero-weight symbols.
In contrast, a LPO~\cite{Dershowitz82} does not fulfil this property if there are function symbols of non-zero arity.
The ordering $\prec_B$ is lifted to literals
by comparing the respective atoms. It is lifted to clauses by a multiset extension.
Given an ordering $\prec_B$ and a ground literal $\beta$, the function $\mGndB$ computes the set of all ground instances
of a literal, clause, or clause set where the grounding is restricted to produce literals $L$ with $L \prec_B \beta$.

A ground clause $C$ is true in a model
$M$, denoted $M\models C$, if $C\cap M \not= \emptyset$. Conversely, a clause $C$ is false in $M$ if $\{\comp(L) \mid L \in C\} \subseteq M$. Otherwise, the clause is undefined in $M$. A ground clause set $N$ is true in $M$, denoted $M\models N$ if
all clauses from $N$ are true in $M$.


\renewcommand{\mrulename}[1]{\ensuremath{}\Rightarrow_{\text{SCL}}^{\text{#1}}}
\renewcommand{\rulename}[1]{$\Rightarrow_{\text{SCL}}^{\text{#1}}$}

\section{SCL Rules and Properties} \label{sec:rules}

The presentation of the SCL rules and properties is following the original SCL presentation~\cite{FioriWeidenbach19}.
However, the SCL rules have been modified to incorporate a different regularity measure, which was first presented in \cite{BrombergerEtAl2020arxiv}.
Furthermore, the rules of the given SCL calculus are changed to support a bounding of the trail size. To this extent, we use a limiting literal $\beta$ and a
well-founded ordering $\prec_B$ which restricts all considered literals.

After the presentation of the SCL rules, we prove the key properties of SCL. These properties include soundness (Theorem~\ref{theo:scl:soundness}),
refutational completeness (Theorem~\ref{theo:scl:refcomplete}), termination (Theorem~\ref{theo:finite-termination}) and non-redundant clause learning (Theorem~\ref{theo:SnnRed}).
In particular, contributions of this paper are a full soundness proof, a rigorous correctness proof (Theorem~\ref{theo:scl:corterm}),
as well as simplifications of the other proofs. Furthermore, all properties and proofs have been refined to adhere to the changed regularity definition as well as trail bounding.

The inference rules of SCL are represented by
an abstract rewrite system.
They operate on a problem state, a six-tuple
$(\Gamma; N; U; \beta; k; D)$ where $\Gamma$ is a sequence
of annotated ground literals, the \emph{trail};
$N$ and $U$ are the sets of \emph{initial} and \emph{learned}
clauses; $\beta$ is a ground literal limiting the size of the trail; $k$ counts the number of decisions; and
$D$ is a status closure that is either true $\top$, false $\bot$,
or $C\cdot\sigma$.
Literals in $\Gamma$ are either annotated with
a number, also called a level; i.e., they have the form $L^k$
meaning that $L$ is the $k$-th guessed decision
literal, or they are annotated with a closure that
propagated the literal to become true.
A ground literal $L$ is of
\emph{level} $i$ with respect to a problem state
$(\Gamma; N; U; \beta; k; D)$ if $L$ or $\comp(L)$ occurs
in $\Gamma$ and the first decision literal left from
$L$ ($\comp(L)$) in $\Gamma$, including $L$, is annotated with $i$.
If there is no such decision literal then its level
is zero. A ground clause $D$ is of \emph{level} $i$
with respect to a problem state $(\Gamma; N; U; \beta; k; D)$
if $i$ is the maximal level of a literal in $D$. The level of the empty clause $\bot$ is 0.
Recall $D$ is a non-empty closure or $\top$ or $\bot$.
Similarly, a trail $\Gamma$ is of level $i$ if the maximal literal in $\Gamma$ is of level $i$.

A literal $L$ is \emph{undefined} in $\Gamma$
if neither $L$ nor $\comp(L)$ occur in $\Gamma$.
We omit annotations to trail literals if they play no role in the respective context.
Initially, the state
for a first-order clause set $N$ is $(\epsilon; N; \emptyset; \beta; 0; \top)$.

\medskip\noindent
The rules for conflict search are:

\bigskip
\shortrules{Propagate}
{$(\Gamma;N;U;\beta;k;\top)$}
{$(\Gamma, L\sigma^{(C_0\lor L){\delta}\cdot\sigma};N;U;\beta;k;\top)$}
{provided $C\lor L\in (N\cup U)$, $C = C_0 \lor C_1$, $C_1\sigma = L\sigma \lor \dots \lor L\sigma$,
  $C_0\sigma$ does not contain $L\sigma$, {$\delta$ is the mgu of the literals in $C_1$ and $L$}, $(C\lor L)\sigma$ is ground, $(C\lor L)\sigma \prec_B \{\beta\}$,
  $C_0\sigma$ is false under $\Gamma$, and $L\sigma$ is undefined in $\Gamma$}{\mSL}{12}

\bigskip
The rule Propagate applies exhaustive factoring to the propagated literal with respect to the grounding substitution $\sigma$ and annotates the factored clause to the propagation
literal on the trail.

\bigskip
\shortrules{Decide}
{$(\Gamma;N;U;\beta;k;\top)$}
{$(\Gamma,L\sigma^{k+1};N;U;\beta;k+1;\top)$}
{provided $L\in C$ for a $C \in (N\cup U)$, $L\sigma$ is a ground literal undefined in $\Gamma$, and  $L\sigma \prec_B \beta$}{\mSL}{12}

\bigskip
\shortrules{Conflict}
{$(\Gamma;N;U;\beta;k;\top)$}
{$(\Gamma;N;U;\beta;k;D\cdot\sigma)$}
{provided $D\in (N\cup U)$, $D\sigma$ false in $\Gamma$
  for a grounding substitution $\sigma$}{\mSL}{12}

\bigskip
These rules construct a (partial) model via the trail $\Gamma$ for $N\cup U$ until a conflict, i.e.,
a false clause with respect to $\Gamma$ is found. The above rules always terminate, because there are
only finitely many ground literals $K$ with $K\prec_B \beta$. Choosing an appropriate $\beta$ is
sufficient for completeness for certain first-order fragments, e.g. the BS fragment. In particular, for any
fragment with the finite model property, completeness can be achieved with SCL for appropriate $\beta$.
In general, a rule Grow~\cite{FioriWeidenbach19}, see below, increasing $\beta$ is needed for full first-order completeness.
In the special case of a unit clause $L$, the rule Propagate actually
annotates the literal $L$ with a closure of itself. So the propagated literals on the trail
are annotated with the respective propagating clause and the decision literals with
the respective level.
If a conflict is found, it is resolved by the rules below.
Before any Resolve step, we
assume that the respective clauses are renamed such that they do not share any variables and
that the grounding substitutions of closures are adjusted accordingly.


\bigskip
\shortrules{Skip}
{$(\Gamma, L;N;U;\beta;k;D\cdot\sigma)$}
{$(\Gamma;N;U;\beta;k-i;D\cdot\sigma)$}
{provided $\comp(L)$ does not occur in $D\sigma$, if $L$ is a decision literal then $i=1$, otherwise $i=0$}{\mSL}{11}


\bigskip
\shortrules{Factorize}
{$(\Gamma;N;U;\beta;k;(D\lor L \lor L')\cdot\sigma)$}
{$(\Gamma;N;U;\beta;k; (D\lor L)\eta\cdot\sigma)$}
{provided $L\sigma = L'\sigma$, $\eta=\mMGU(L,L')$}{\mSL}{11}


\bigskip
\shortrules{Resolve}
{$(\Gamma, L\delta^{(C\lor L)\cdot\delta};N;U;\beta;k;(D\lor L')\cdot\sigma)$ \\ \hspace*{2.46em} }
{$(\Gamma, L\delta^{(C\lor L)\cdot\delta};N;U;\beta;k;(D\lor C)\eta\cdot\sigma\delta)$}
{provided $L\delta = \comp(L'\sigma)$,
 $\eta=\mMGU(L,\comp(L'))$}{\mSL}{13}


\bigskip
\shortrules{Backtrack}
{$(\Gamma_0,K,\Gamma_1,\comp(L\sigma)^k;N;U;\beta;k;(D\lor L)\cdot\sigma)$ \\ \hspace*{2.8em}}
{$(\Gamma_0;N;U\cup\{D\lor L\};\beta;j;\top)$}
{provided $D\sigma$ is of level $i'<k$,
 and $\Gamma_0,K$ is the minimal trail subsequence such that there is
 a grounding substitution $\tau$ with $(D \lor L)\tau$ is false in $\Gamma_0,K$ but not in  $\Gamma_0$, and $\Gamma_0$ is of level $j$}{\mSL}{13}


\bigskip
Please note that it is not always sufficient to backtrack to the smallest decision level from which $(D\lor L)\cdot\sigma$ can propagate (i.e., the case where $\tau = \sigma$ and $i' = j$).
The reason is that there might be other groundings $\tau$ for which $(D\lor L) \cdot \tau$ is now conflicting at level $i'$ due to Resolve and Factorize steps.
Therefore, SCL backtracks to the smallest decision level from which $(D\lor L)$ can propagate for any grounding $\tau$.
The clause $D\lor L$
added by the rule Backtrack to $U$ is called a
\emph{learned clause}.
The empty clause $\bot$ can only be generated by
rule Resolve or be already present in $N$, hence, as usual for CDCL style calculi, the generation
of $\bot$ together with the clauses in $N\cup U$ represent a resolution refutation.

\begin{example}

For example, consider the following clause set:
$$N = \left\{ \begin{array}{ll}
  C_1 = P(x) \lor Q(b), \hspace{0.6em} &C_2 = P(x) \lor \neg Q(y), \\
  C_3 = \neg P(a) \lor Q(x), \hspace{0.6em} &C_4 = \neg P(x) \lor \neg Q(b)

\end{array}
\right\}$$

To refute this clause set, we choose $\beta = R(b)$ and $\prec_B$ is a KBO with all symbols having unit weight and with precedence $a \prec b \prec P \prec Q \prec R$. Hence, it holds that $\{ P(a), P(b), Q(a), Q(b)\} \prec_B \{\beta\}$.
A possible SCL refutation for $N$ could look as follows:

\renewcommand{\arraystretch}{1.2}
\[
 \begin{array}[]{ll}
 & (\varepsilon; N; \emptyset; \beta; 0; \top) \\
 \mrulename{Decide}     & (\neg P(a)^1; N; \emptyset; \beta; 1; \top) \\
 \mrulename{Propagate}  & (\neg P(a)^1 \neg Q(b)^{C_2 \cdot  \{x \mapsto a, y \mapsto b\}}; N; \emptyset; \beta; 1; \top) \\
 \mrulename{Conflict}  & (\neg P(a)^1 \neg Q(b)^{C_2 \cdot \{x \mapsto a, y \mapsto b\}}; N; \emptyset; \beta; 1; P(x) \lor Q(b) \cdot \{ x \mapsto a \}) \\
 \mrulename{Resolve}  & (\neg P(a)^1 \neg Q(b)^{C_2 \cdot \{x \mapsto a, y \mapsto b\}}; N; \emptyset; \beta; 1; P(x) \lor P(x) \cdot \{ x \mapsto a, y \mapsto b \}) \\
 \mrulename{Factorize}  & (\neg P(a)^1 \neg Q(b)^{C_2 \cdot \{x \mapsto a, y \mapsto b\}}; N; \emptyset; \beta; 1; P(x) \cdot \{ x \mapsto a, y \mapsto b \}) \\
 \mrulename{Skip}  & (\neg P(a)^1; N; \emptyset; \beta; 1; P(x) \cdot \{ x \mapsto a, y \mapsto b \}) \\
 \mrulename{Backtrack}  & (\varepsilon; N; \{P(x)\}; \beta; 0; \top) \\
 \mrulename{Propagate}  & (P(a)^{P(x) \cdot \{x \mapsto a\}}; N; \{P(x)\}; \beta; 0; \top) \\
 \mrulename{Propagate}  & (P(a)^{P(x) \cdot \{x \mapsto a\}} Q(b)^{C_3 \cdot \{x \mapsto b\}}; N; \{P(x)\}; \beta; 0; \top) \\
 \mrulename{Conflict}  & (P(a)^{P(x) \cdot \{x \mapsto a\}} Q(b)^{C_3 \cdot \{x \mapsto b\}}; N; \{P(x)\}; \beta; 0; \neg P(x) \lor \neg Q(b) \cdot \{x \mapsto a\}) \\
 \mrulename{Resolve}  & (P(a)^{P(x) \cdot \{x \mapsto a\}} Q(b)^{C_3 \cdot \{x \mapsto b\}}; N; \{P(x)\}; \beta; 0; \neg P(x) \lor \neg P(a) \cdot \{x \mapsto a\}) \\
 \mrulename{Skip}  & (P(a)^{P(x) \cdot \{x \mapsto a\}}; N; \{P(x)\}; \beta; 0; \neg P(x) \lor \neg P(a) \cdot \{x \mapsto a\}) \\
 \mrulename{Factorize}  & (P(a)^{P(x) \cdot \{x \mapsto a\}}; N; \{P(x)\}; \beta; 0; \neg P(a) \cdot \{x \mapsto a\}) \\
 \mrulename{Resolve}  & (P(a)^{P(x) \cdot \{x \mapsto a\}}; N; \{P(x)\}; \beta; 0; \bot \cdot \{x \mapsto a\}) \\
\end{array}
\]

Note the resolution steps are guided by the trail, but always happen between the non-ground original clauses. This allows SCL to learn the non-ground unit clause $P(x)$ in this derivation.
\end{example}

The rules for SCL are applied in a don't-care style, hence, the calculus
offers freedom with respect to factorization. Literals in the conflict clause
can, but do not have to be factorized. In particular, the Factorize rule
may remove duplicate literals. This freedom can result in different learned clauses, see Example \ref{ex:factorize_different}.
The rule Resolve does not remove the literal resolved upon from the trail.
Actually, Resolve is applied as long as the rightmost propagated trail literal
occurs in the conflict clause. This literal is eventually removed by rule Skip from the
trail.

\begin{example} \label{ex:factorize_different}
For example, consider the clause set
\begin{align*}
  N = \left\{
    \begin{array}[]{l}
      D = Q \lor R(a,y)\lor R(x,b) \\
      C = Q\lor S(x,y)\lor P(x)\lor P(y)\lor \lnot R(x,y)
    \end{array}
    \right\}
\end{align*}
and a problem state:
$$([\lnot P(a)^1, \lnot P(b)^2,\lnot S(a,b)^3,\lnot Q^4, \lnot R(a,b)^{C\cdot\{x\mapsto a, y\mapsto b\}}];N;\emptyset;\neg R(b,b); 4;\top)$$
derived by SCL. We assume $\neg R(b,b)$ to the largest literal among all ground instances of $P$, $S$, $Q$, $R$ literals over the constants $a$, $b$.
The rule Conflict is applicable and yields the conflict state
\begin{align*}
  (\lnot P(a)^1, \lnot P(b)^2,\lnot S(a,b)^3,\lnot Q^4, \neg R(a,b)^{C\cdot\{x\mapsto a, y\mapsto b\}}
  ; \\ N;\emptyset;\neg R(b,b); 4;
  D\cdot\{x\mapsto a, y\mapsto b\})
\end{align*}
from which we can either learn the clause
  $$C_1=Q\lor S(x,b)\lor P(x)\lor P(b)\lor S(a,y)\lor P(a)\lor P(y)$$
  or the clause
  $$C_2=Q\lor S(a,b)\lor P(a)\lor P(b)$$
  depending on whether we first resolve or factorize. Note that $C_2$ does not subsume $C_1$. Both clauses are non-redundant.
  In order to learn $C_1$ we need to resolve twice with $R(a,b)^{C\cdot\{x\mapsto a, y\mapsto b\}}$.
\end{example}

The first property we prove about SCL is soundness. We prove it via the notion of a sound state.

\begin{definition}[Sound States]
  \label{def:sound_states}

  A state $(\Gamma;N;U;\beta;k;D)$ is \emph{sound} if the following conditions hold:
  \begin{enumerate}
    \item[1.] $\Gamma$ is a consistent sequence of annotated ground literals, i.e. for a ground literal $L$ it cannot be that $L \in \Gamma$ and $\neg L \in \Gamma$
    \item[2.] for each decomposition
      $\Gamma = \Gamma_1,L\sigma^{C\lor L\cdot\sigma},\Gamma_2$ we have that $C\sigma$ is false under $\Gamma_1$ and
      $L\sigma$ is undefined under $\Gamma_1$, and $N\cup U \models C\lor L$,
    \item[3.] for each decomposition $\Gamma = \Gamma_1,L^k,\Gamma_2$
      we have that $L$ is undefined in $\Gamma_1$,
    \item[4.] $N\models U$,
    \item[5.] if $D=C\cdot\sigma$ then $C\sigma$ is false
    under $\Gamma$ and $N\models C$. In particular, $\mGndB(N) \models C\sigma$,
    \item[6.] for any $L\in\Gamma$ we have $L\prec_B\beta$ and there is a $C \in N\cup U$, $L' \in C$, and a grounding $\sigma$ such that $L' \sigma = L$.
  \end{enumerate}
\end{definition}

To show soundness of SCL, we first show soundness of the initial state. Then, we show that all SCL rule applications preserve soundness, which shows soundness of the overall calculus starting from the initial state.

\begin{lemma}[Soundness of the initial state]
  \label{lemma:soundness_initial}
  The initial state $(\epsilon; N; \emptyset; \beta; 0; \top)$ is sound.
\end{lemma}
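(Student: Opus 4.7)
The plan is to verify each of the six conditions from Definition \ref{def:sound_states} for the initial state $(\epsilon; N; \emptyset; \beta; 0; \top)$. Since the trail is empty, the clause set $U$ is empty, and the status closure is $\top$, almost every condition holds vacuously, so the proof is essentially a matter of spelling this out.

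Concretely, I would proceed condition by condition. Condition~1 (consistency of $\Gamma$) holds because the empty sequence contains no literals and hence no contradictory pair. Conditions~2 and~3 concern decompositions of $\Gamma$ of the form $\Gamma_1, L\sigma^{C\lor L\cdot\sigma}, \Gamma_2$ or $\Gamma_1, L^k, \Gamma_2$ respectively; since $\Gamma = \epsilon$, no such decomposition exists, so both are vacuously satisfied. Condition~4 requires $N \models U$; since $U = \emptyset$, every model of $N$ trivially satisfies every clause in $U$. Condition~5 is conditioned on $D$ being a non-$\top$ closure $C\cdot\sigma$; since $D = \top$ in the initial state, the condition is vacuously true. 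Finally, condition~6 quantifies over literals $L \in \Gamma$, and again $\Gamma = \epsilon$ makes this vacuous.

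Since there is no nontrivial obstacle in this lemma (each condition either reduces to a property of the empty sequence or to $U = \emptyset$ and $D = \top$), the main ``difficulty'' is simply ensuring that every clause of Definition~\ref{def:sound_states} is explicitly addressed so that later soundness-preservation lemmas can rely on this as a genuine base case.
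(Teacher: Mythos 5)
Your proposal is correct and follows exactly the same route as the paper's proof: conditions 1--3 and 6 hold vacuously because $\Gamma = \epsilon$, condition 4 because $U = \emptyset$, and condition 5 because $D = \top$. The only difference is that you spell out the vacuous cases in more detail than the paper does.
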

\begin{proof}
  Criteria 1--3 and 6 are trivially satisfied by $\Gamma = \epsilon$. Furthermore, $N \models \emptyset$, fulfilling criterion 4. Lastly, criterion 5 is trivially fulfilled for $D = \top$.
\end{proof}


\begin{theorem}[Soundness of \SL{}]\label{theo:scl:soundness}
  All \SL{} rules preserve soundness, i.e. they map a sound state onto a sound state.
\end{theorem}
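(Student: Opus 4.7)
The plan is to prove the theorem by case analysis on the seven rules (Propagate, Decide, Conflict, Skip, Factorize, Resolve, Backtrack), checking for each rule that the six conditions of Definition~\ref{def:sound_states} are preserved. For each rule most conditions are trivially inherited because the affected state component (e.g.\ $U$, or the trail, or the status closure) is unchanged; I will dispatch these quickly and focus on the conditions actually touched by the rule.

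For the three model-building rules the argument is direct. For Propagate, $L\sigma$ is undefined in $\Gamma$, giving condition~1 and condition~3; condition~2 for the new annotation holds because $C_0\sigma$ is false under $\Gamma$ by precondition and $N\cup U \models C_0\lor L$ follows from $C\lor L \in N\cup U$ together with the fact that applying the unifier $\delta$ yields a logical consequence; condition~6 is immediate from $(C\lor L)\sigma \prec_B \{\beta\}$. Decide adds an undefined literal bounded by~$\beta$, so conditions 1, 3 and~6 are immediate while 2, 4, 5 are unaffected. Conflict only touches condition~5: $D\sigma$ is false in $\Gamma$ by the precondition, and $N\models D$ follows because $D\in N\cup U$ and $N\models U$ by the inductive hypothesis on condition~4.

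The conflict-resolution rules require a bit more care but remain routine. Skip shrinks the trail, so conditions 1--3 and~6 are inherited on a subsequence, and the hypothesis $\comp(L)\notin D\sigma$ ensures $D\sigma$ is still false in the shortened trail. For Factorize, since $L\sigma=L'\sigma$ and $\eta=\mMGU(L,L')$, the clause $(D\lor L)\eta$ is entailed by $D\lor L\lor L'$, and the ground instance $(D\lor L)\eta\sigma$ has the same literal set as $(D\lor L\lor L')\sigma$, hence remains false in $\Gamma$ and entailed by $N$. For Resolve, the resolvent $(D\lor C)\eta\sigma\delta$ is obtained from the conflict closure $(D\lor L')\cdot\sigma$ and the trail-annotation clause $(C\lor L)$; the latter is in $N\cup U$ and is entailed by $N$ through conditions 2 and~4, so $N$ entails the resolvent; moreover $D\sigma$ was already false under $\Gamma$ and $C\delta$ is false under the prefix preceding $L\delta$ by condition~2, so the resolvent is false under $\Gamma$.

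The main obstacle is Backtrack, where the trail shrinks to $\Gamma_0$, the clause $D\lor L$ is added to $U$, and the status returns to~$\top$. Conditions 1--3 restrict to the prefix $\Gamma_0$ of the old trail and are preserved automatically; condition~6 likewise since $\Gamma_0\subseteq\Gamma$; condition~5 becomes vacuous. The delicate point is condition~4: I need $N\models U\cup\{D\lor L\}$, which follows from the inductive hypothesis $N\models U$ together with $N\models D\lor L$, the latter given by condition~5 of the pre-state. I will also observe that the technical precondition that $\Gamma_0,K$ is the \emph{minimal} prefix on which some grounding of $D\lor L$ is false does not enter the soundness argument at all; it is only needed for the termination and non-redundancy results (e.g.\ Lemma~\ref{lemma:resolve_in_regular}). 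Once Backtrack is handled, the theorem follows by induction on the length of the SCL derivation, using Lemma~\ref{lemma:soundness_initial} as the base case.
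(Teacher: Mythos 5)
Your proposal is correct and follows essentially the same route as the paper: the identical rule-by-rule case analysis against the six conditions of Definition~\ref{def:sound_states}, with the same key steps (condition~2 for Propagate via soundness of factoring, condition~5 for Resolve via condition~2 and soundness of resolution, condition~4 for Backtrack via condition~5 of the pre-state), concluded by induction with Lemma~\ref{lemma:soundness_initial} as the base case. The only detail you gloss over is the subclause $\mGndB(N) \models C\sigma$ of condition~5, which the paper discharges by observing that a clause false under $\Gamma$ has all its literals $\prec_B \beta$ because the trail literals are bounded by condition~6.
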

\begin{proof}
  As the hypothesis, assume that a state $(\Gamma; N; U; \beta; k; D)$ is sound. We show that any application of a rule results again in a sound state.


    \medskip
    \noindent
  \rulename{Decide}. Assume Decide is applicable to $(\Gamma; N; U; \beta; k; D)$, yielding a resulting state $(\Gamma, L\sigma^{k+1}; N; U; \beta; k+1; D)$. Then there is a $L\in C$ for $C \in N \cup U$, $L\sigma$ is ground and undefined in $\Gamma$, and $L\sigma \prec_B \beta$. Also, there can be no active conflict, i.e. $D=\top$.
  \begin{itemize}
    \item[1, 3] By the precondition, $L\sigma$ is undefined in $\Gamma$ (3). Hence, adding $L\sigma$ does not make $\Gamma$ inconsistent (1).
    \item[2, 4] Trivially fulfilled by hypothesis.
    \item[5] Since $D = \top$, the rule is trivially satisfied.
    \item[6] For all literals $L'\sigma' \in \Gamma$, this holds by hypothesis. For $L\sigma$ this follows directly from the preconditions of the rule.
  \end{itemize}

  \noindent
  \rulename{Propagate}. Assume Propagate is applicable to $(\Gamma; N; U; \beta; k; D)$, yielding a resulting state $(\Gamma, L\sigma^{(C_0 \lor L){\delta}\cdot\sigma};N;U;\beta;k;D)$.
  Then, there is a $C\lor L\in (N\cup U)$ such that $C = C_0 \lor C_1$, $C_1\sigma = L\sigma \lor \dots \lor L\sigma$,
  $C_0\sigma$ does not contain $L\sigma$, $\delta$ is the mgu of the literals in $C_1$ and $L$, $(C\lor L)\sigma$ is ground, $(C\lor L)\sigma \prec_B \{\beta\}$,
  $C_0\sigma$ is false under $\Gamma$, and $L\sigma$ is undefined in $\Gamma$.
  Also, there can be no active conflict, i.e. $D=\top$.
  \begin{itemize}
    \item[1, 3] By the precondition, $L\sigma$ is undefined in $\Gamma$ (3). Hence, adding $L\sigma^{(C_0 \lor L){\delta}\cdot\sigma}$ does not make $\Gamma$ inconsistent (1).
    \item[2] Consider any decomposition $\Gamma, L\sigma^{(C_0 \lor L){\delta}\cdot\sigma} = \Gamma_1, L'\sigma'^{C'_0\lor L' \cdot \sigma'}, \Gamma_2$.
    In the case of $L'\sigma \not= L\sigma$, we can apply the hypothesis for the state $(\Gamma; N; U; \beta; k; D)$.
    Hence, only the case  $\Gamma_1 = \Gamma$, $L'\sigma' = L\sigma$, and $C'_0\sigma = C_0\sigma$ is left to prove.

    First, note that $C_0\sigma$ is false under $\Gamma_1 = \Gamma$ by the preconditions. Also, $L\sigma$ must be undefined in $\Gamma$ by the preconditions. Lastly, it needs to be shown that $N \cup U \models (C_0 \lor L)\delta$. Clearly, since $C\lor L \in (N \cup U)$, it holds that $N \cup U \models C \lor L$.
    Since $C=C_0 \lor C_1$ and $C_1\sigma = L\sigma \lor \dots \lor L\sigma$ it follows from the soundness of first-order factorization that $C \models (C_0 \lor L)$ and by this $N\cup U \models C_0 \lor L$.
    \item[4] Follows trivially from the induction hypothesis.
    \item[5] Since $D = \top$, this rule is trivially satisfied.
    \item[6] For all literals $L'\sigma' \in \Gamma$, this holds by hypothesis. For $L\sigma$, consider the precondition that $(C \lor L)\sigma \prec_B \{\beta\}$. By the definition of the multiset extension of $\prec_B$, it follows that $L\sigma \prec_B \beta$ must hold as well.
  \end{itemize}

  \noindent
  \rulename{Conflict}. Assume Conflict is applicable to $(\Gamma; N; U; \beta; k; D)$, yielding a resulting state $(\Gamma;N;U;\beta;k;C\cdot\sigma)$.
  Then, there is a $C\in (N\cup U)$ such that $C\sigma$ is false in $\Gamma$ for a grounding $\sigma$.

  \begin{itemize}
    \item[1-3] Trivially fulfilled by hypothesis, as the trail $\Gamma$ is not modified.
    \item[4] Follows trivially from the induction hypothesis, as neither $N$ nor $U$ are modified.
    \item[5] It holds that $D = C\cdot\sigma$. By the preconditions of Conflict, $C\sigma$ must be false under $\Gamma$. Furthermore, since $C \in (N \cup U)$ it holds that $N\cup U \models C$. Since $N \models U$ by soundness (4), it also holds that $N \models C$. Lastly, it remains to show that $\mGndB(N) \models C\sigma$. By soundness (6), we know that for all literals $L\mu \in \Gamma$ it holds that $L\mu \prec_B \beta$. Since $C\sigma$ is false in $\Gamma$, it must hold that all literals in $C\sigma$ are also $\prec_B \beta$. Combined with $N \models C$, this yields that $\mGndB(N) \models C\sigma$.
    \item[6] Fulfilled by the hypothesis, since no literal is added to $\Gamma$.
  \end{itemize}

   \noindent
   \rulename{Skip}. Assume Skip is applicable to $(\Gamma = \Gamma', L; N; U; \beta; k; D\cdot\sigma)$, yielding a resulting state $(\Gamma';N;U;\beta;k-i;{D\cdot\sigma})$.
   By the preconditions of skip, it must hold that $\comp(L)$ does not occur in $D\sigma$, and if $L$ is a decision literal then $i = 1$ else $i = 0$.
   \begin{itemize}
    \item[1-3, 6] Directly fulfilled by hypothesis, as all prefixes of $\Gamma$ still fulfil all properties. In particular, this holds for the prefix $\Gamma'$ of $\Gamma$.
    \item[4] Follows trivially from the induction hypothesis, as $U$ is not modified.
    \item[5] After the application of Skip, ${D\cdot \sigma}$ is the current conflict. Since $D$ is not modified, $N \models D$ and $\mGndB(N) \models D\sigma$ by hypothesis. It is left to show that $D\sigma$ is false under the resulting $\Gamma'$, given the assumption that $D\sigma$ is false under $\Gamma$.
    However, since $\comp(L) \not\in D\sigma$, this is trivially fulfilled, as the removal of $\comp(L)$ from the trail $\Gamma$ cannot make $D\sigma$ undefined. Hence, $D\sigma$ must be false under $\Gamma'$ as well.
  \end{itemize}

   \noindent
   \rulename{Factorize}. Assume Factorize is applicable to $(\Gamma; N; U; \beta; k; (D\lor L \lor L')\cdot\sigma)$, yielding a resulting state $(\Gamma;N;U;\beta;k;(D\lor L)\eta\cdot\sigma)$.
    Then, $L\sigma = L'\sigma$ and $\eta=\mMGU(L,L')$.

    \begin{itemize}
      \item[1-3, 6] Trivially fulfilled by hypothesis, as the trail $\Gamma$ is not modified.
      \item[4] Follows trivially from the induction hypothesis, as $U$ is not modified.
      \item[5] After the application of Factorize, ${(D\lor L)\eta\cdot\sigma}$ is the current
      conflict. By the hypothesis $N \models (D\lor L \lor L')$. From the preconditions of Factorize, $L\sigma = L'\sigma$ and $\eta=\mMGU(L,L')$. Thus, $(D\lor L \lor L')\eta$ is an instance of $(D\lor L \lor L')$ and $N \models (D\lor L \lor L')\eta$. Since $L\eta = L'\eta$, $(D\lor L \lor L')\eta \models (D \lor L')\eta$. Thus, $N \models (D\lor L)\eta$. By the preconditions, $\mGndB(N) \models \mGndB((L\lor L\lor L')\sigma)$. Hence, $(D\lor L\lor L')\sigma \prec_B \{\beta\}$. Thus, $(D\lor L)\eta\sigma = (D\lor L)\sigma \prec_B \{\beta\}$. From this, it follows that $\mGndB(N) \models \mGndB((D \lor L)\sigma)$.

      Furthermore, $(D\lor L)\eta\sigma$ is false under $\Gamma$, since $(D\lor L)\eta\sigma = (D\lor L)\sigma$ by the definition of an mgu, and $(D\lor L\lor L')\sigma$ is already false under $\Gamma$.
    \end{itemize}

    \noindent
    \rulename{Resolve}. Assume the rule Resolve is applicable to an SCL state of the shape $(\Gamma = \Gamma', L\delta^{(C\lor L)\cdot\delta};N;U;\beta;k;{(D\lor L')\cdot\sigma})$, yielding a resulting state $(\Gamma;N;U;\beta;k;{(D\lor C)\eta\cdot\sigma\delta})$.
      By the preconditions of Resolve, it holds that $L\delta = \comp(L'\sigma)$ and $\eta=\mMGU(L,\comp(L'))$.

      \begin{itemize}
        \item[1-3, 6] Trivially fulfilled by hypothesis, as the trail $\Gamma$ is not modified.
        \item[4] Follows trivially from the induction hypothesis, as $U$ is not modified.
        \item[5] After the application of Resolve, ${(D\lor C)\eta\cdot\sigma\delta}$ is the current
        conflict.

        By the hypothesis, $(D \lor L')\sigma$ is false under $\Gamma$. In particular, $D\sigma$ is false under $\Gamma$. By soundness (2), we know that $C\delta$ must be false under $\Gamma$ as well. Hence, $(D \lor C)\eta\sigma\delta$ is false under $\Gamma$.

        Furthermore, by the hypothesis, $N \models (D\lor L')$. Since $(D \lor L')\eta$ is an instance of $(D \lor L')$, it holds that $N \models (D \lor L')\eta$. Furthermore, by soundness (2) we know that $N\cup U \models (C \lor L)$ and by
        soundness (4) this implies that $N \models (C \lor L)$. With similar argumentation, also $N \models (C\lor L)\eta$. By the soundness of resolution, this implies $N \models (D\lor C)\eta$.

        Lastly, since $(D\lor L')\sigma$ is false in $\Gamma$, all occuring literals in $\{(D \lor L')\sigma\} \prec_B \{\beta\}$. With similar argumentation, $\{(C \lor L)\delta\} \prec_B \{\beta\}$. Hence, in particular, $(D\lor C)\eta\sigma\delta \prec_B \{\beta\}$ and, thus, $\mGndB(N) \models \mGndB((D\lor C)\eta\sigma\delta)$.
      \end{itemize}

      \noindent
      \rulename{Backtrack}. Assume the rule Backtrack is applicable to a SCL state of shape $(\Gamma = \Gamma_0,K,\Gamma_1;N;U;\beta;k;{(D\lor L)\cdot\sigma})$, yielding the resulting SCL state $(\Gamma_0,K;N;U\cup\{D\lor L\};\beta;k';\top)$.

      \begin{itemize}
       \item[1-3, 6] Directly fulfilled by hypothesis, as all prefixes of $\Gamma$ still fulfil all properties. In particular, this holds for the prefix $\Gamma_0,K$ of $\Gamma$. 
       \item[4] By the hypothesis, we know that $N \models U$. By soundness (5) we know that $N \models (D\lor L)$. Overall, $N \models U \cup \{D \lor L\}$
       \item[5] Since after an application of Backtrack the conflict is resolved, i.e. $D= \top$, the rules are trivially satisfied.
      \end{itemize}

\end{proof}

\begin{corollary}\label{corr:scl:full-soundness}
  The rules of \SL{} are sound, hence \SL{} starting with an initial state is sound.
\end{corollary}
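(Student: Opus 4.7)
The plan is to derive this corollary by a straightforward induction on the length of the SCL derivation, leveraging the two preceding results. Concretely, for any state $S_n$ reached after $n$ rule applications starting from $(\epsilon; N; \emptyset; \beta; 0; \top)$, I would show that $S_n$ is sound in the sense of Definition~\ref{def:sound_states}.

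For the base case ($n=0$), I invoke Lemma~\ref{lemma:soundness_initial}, which already establishes that the initial state $(\epsilon; N; \emptyset; \beta; 0; \top)$ satisfies all six conditions of Definition~\ref{def:sound_states}. For the inductive step, assume $S_n$ is sound and $S_n \Rightarrow_{\text{SCL}} S_{n+1}$ via some rule of the calculus. Theorem~\ref{theo:scl:soundness} immediately yields that $S_{n+1}$ is sound as well, since the theorem performs an exhaustive case analysis showing that each of the seven rules (Propagate, Decide, Conflict, Skip, Factorize, Resolve, Backtrack) preserves the soundness invariant.

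Since any SCL run from the initial state consists of finitely many such rule applications, every reachable state is sound, which is exactly the content of the corollary. As an operational consequence worth stating, I would note that if a state of the form $(\Gamma; N; U; \beta; k; \bot)$ is ever reached, then by condition 5 of Definition~\ref{def:sound_states} we have $N \models \bot$, so $N$ is unsatisfiable; and if the calculus terminates with $D = \top$ on a trail $\Gamma$, the annotations combined with conditions 2 and 4 ensure that $\Gamma$ is built entirely from $N \cup U$-consequences.

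I do not foresee any real obstacle: all the heavy lifting---verifying the six invariants separately for each inference rule and checking delicate points like the preservation of $\prec_B$-boundedness after Resolve and Factorize---has already been carried out in Theorem~\ref{theo:scl:soundness}. The corollary is essentially the assembly of Lemma~\ref{lemma:soundness_initial} and Theorem~\ref{theo:scl:soundness} into the standard induction schema for abstract rewrite systems, so the proof should fit in a few lines.
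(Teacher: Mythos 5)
Your proposal is correct and matches the paper's own argument exactly: induction on the length of the run, with Lemma~\ref{lemma:soundness_initial} as the base case and Theorem~\ref{theo:scl:soundness} supplying the inductive step. The additional remarks about $D=\bot$ and $D=\top$ are fine but not needed for the corollary itself.
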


\begin{proof}
  Follows by induction over the size of the run. The base case is handled by Lemma \ref{lemma:soundness_initial}, the induction step is contained in Theorem \ref{theo:scl:soundness}.
\end{proof}

Next we introduce reasonable and regular runs. Note that the following definitions are now changed. The new definitions have been adapted from \cite{BrombergerEtAl2020arxiv} to fit the classical SCL calculus.
As an overall goal, we will show that regular runs always generate non-redundant clauses

\begin{definition}[Reasonable Runs]\label{defi:reasonable}
  A sequence of SCL rule applications is called a \emph{reasonable run} if the rule Decide does not enable
  an immediate application of rule Conflict.
\end{definition}

\begin{definition}[Regular Runs]\label{defi:regular}
A sequence of SCL rule applications is called
a \emph{regular run} if it is a reasonable run and the rule Conflict has precedence over all
other rules.
\end{definition}

\begin{theorem}[Correct Termination] \label{theo:scl:corterm}
  If in a regular run
  no rules are applicable to a state $(\Gamma;N;U;\beta;k;D)$ then either $D=\bot$ and $N$ is
  unsatisfiable or $D = \top$ and $\mGnd(N)^{\prec_B \beta}$ is satisfiable and $\Gamma\models \mGnd(N)^{\prec_B \beta}$.
\end{theorem}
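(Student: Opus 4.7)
The plan is to split on the final value of $D$. If $D = \bot$, then by Corollary~\ref{corr:scl:full-soundness} the reached state is sound, and the status $\bot$ records that the empty clause has been derived as the conflict, so condition~5 of Definition~\ref{def:sound_states} yields $N \models \bot$ and $N$ is unsatisfiable, as required.

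If $D = \top$, I need to show that $\Gamma$ satisfies every ground instance of $N$ bounded by $\beta$. Suppose, for contradiction, that there exist $C \in N$ and a grounding $\sigma$ with $C\sigma \prec_B \{\beta\}$ and $\Gamma \not\models C\sigma$; then every literal of $C\sigma$ is either false or undefined in $\Gamma$. I derive a contradiction by exhibiting an applicable rule. If every literal of $C\sigma$ is false in $\Gamma$, Conflict applies to $C$ with grounding $\sigma$. Otherwise, let $\mathcal{U}$ be the set of distinct undefined literals of $C\sigma$. If $\mathcal{U} = \{L\sigma\}$ is a singleton, factor $C$ as $C_0 \lor C_1 \lor L$ where $C_0\sigma$ consists only of literals false in $\Gamma$ and $C_1 \lor L$ collects exactly the literals that $\sigma$ sends to $L\sigma$; the mgu $\delta$ of the literals of $C_1$ with $L$ exists (they are equalised by $\sigma$), the bound $(C_0 \lor C_1 \lor L)\sigma \prec_B \{\beta\}$ is inherited from $C\sigma$, and all conditions of Propagate are met.

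The subtle case is $|\mathcal{U}| \geq 2$. Pick any $L\sigma \in \mathcal{U}$: then $L$ occurs in $C \in N$, $L\sigma$ is undefined in $\Gamma$, and $L\sigma \prec_B \beta$, so the syntactic preconditions of Decide hold. The only obstruction in a regular run would be that Decide on $L\sigma$ immediately enables Conflict, i.e., some $E \in N \cup U$ and grounding $\tau$ make $E\tau$ false in $(\Gamma, L\sigma^{k+1})$. Since Conflict is not already applicable to the current state, $E\tau$ is not already false in $\Gamma$, so $E\tau$ must contain $\comp(L\sigma)$ with all remaining literals false in $\Gamma$. All literals of $E\tau$ then lie in $\Gamma \cup \{L\sigma\}$ and are $\prec_B \beta$ by soundness condition~6 together with the bound on $L\sigma$, so $E\tau \prec_B \{\beta\}$. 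Factoring the copies of $\comp(L\sigma)$ in $E$ matches the Propagate precondition, whence Propagate is applicable to $E$ with $\tau$ in the current state, contradicting the no-rule hypothesis. Hence every $C\sigma \in \mGnd(N)^{\prec_B \beta}$ is satisfied by $\Gamma$, so $\Gamma \models \mGnd(N)^{\prec_B \beta}$, and in particular $\mGnd(N)^{\prec_B \beta}$ is satisfiable.

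The main obstacle is the $|\mathcal{U}| \geq 2$ subcase: the regularity side condition on Decide has to be converted back into applicability of Propagate in the original state. This conversion requires threading the factoring of $E$ into the exact shape demanded by the Propagate rule and carefully verifying the $\prec_B$-bound on the witnessing clause $E\tau$ from soundness condition~6 plus the bound on $L\sigma$.
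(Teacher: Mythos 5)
Your treatment of the two cases you consider is correct and essentially the paper's own argument: for $D=\bot$ you invoke soundness condition~5 exactly as the paper does, and for $D=\top$ your conversion of the reasonability block on Decide into an application of Propagate (splitting the witnessing clause $E$ into a part false under $\Gamma$ and the copies of $\comp(L\sigma)$, and deriving $E\tau \prec_B \{\beta\}$ from soundness condition~6 together with $L\sigma \prec_B \beta$) is precisely the paper's argument for why Propagate can always be applied when Decide is forbidden.

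However, there is a genuine gap: the conclusion of the theorem asserts that a stuck state must satisfy $D=\bot$ or $D=\top$, so you must also rule out stuck states whose status is a non-empty conflict closure $C\cdot\sigma$. You announce a split on the value of $D$ but only treat $\bot$ and $\top$; the third shape never appears, and it is where roughly half of the paper's proof lives. Concretely, one has to show that during conflict resolution some rule is always applicable: if $\Gamma=\varepsilon$, soundness condition~5 forces $C\sigma$ to be the empty clause, so in fact $D=\bot$; otherwise one analyses the rightmost trail literal. If it is a propagated literal, Resolve applies when its complement occurs in $C\sigma$ and Skip applies otherwise. If it is the decision literal of level $k$, Skip applies when its complement does not occur in $C\sigma$; if it does occur, then either the remaining part of the conflict is of level $i'<k$ and Backtrack applies (the backjump level $j$ then always exists), or the remaining part is of level $k$, in which case it must contain a further copy of the complement of that decision literal (the decision is the only defined literal of level $k$ on the trail), so Factorize applies. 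Without this case analysis the disjunction claimed in the statement is not established, even though the parts you did prove are sound.
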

\begin{proof}

\begin{extendedonly}
  Consider a state $(\Gamma;N;U;\beta;k;D)$. Then, $D$ can have one of the following shapes:

  \medskip
  \noindent \emph{(Case $D = \top$)} If $D = \top$, then there is no active conflict.
  Assume there are no undefined ground literals $L \prec_B \beta$ for $L \in C$, $C \in N\cup U$ in $\Gamma$. Now, either  $\Gamma \models \mGndB(N)$ and thus $\Gamma$ is already a partial model for $N$ w.r.t. $\prec_B$ and $\beta$. Otherwise, if $\Gamma \not\models \mGndB(N)$ but all literals are defined, there must be a false clause $C \in \mGndB(N)$ which can be chosen as a Conflict instance.

  If there is at least one undefined ground literal $L \prec_B \beta$ occuring in $N \cup U$, one of the trail building rules Propagate, Decide, or Conflict are applicable. Decide on the undefined ground literal $L$ is, by the preconditions of the rule, in such a case always possible. The application of Decide can, however, be restricted by reasonability or regularity.

  If Decide on $L$ is not applicable by reasonability,
  then $\Gamma, {L}^{k+1}$ must lead to a direct application of Conflict. Thus, there is a clause $D \in N \cup U$ such that $D\sigma$ is false under $\Gamma, L^{k+1}$. If $D\sigma$ is already false under $\Gamma$, then Conflict is applicable. Otherwise, $D$ has the shape $D_0 \lor D_1$ where $D_0$ is false under $\Gamma$, and $D_1\sigma = \comp(L) \lor \dots \lor \comp(L)$. Since $D_0$ is false under $\Gamma$, also $D_0 \prec_B \{\beta\}$ and since $L \prec_B \beta$ it holds that $D_0 \lor D_1 \prec_B \{\beta\}$ by the definition of our multiset extension. Hence, Propagate can be applied.

  If Decide is not applicable by regularity, Conflict must be applicable, since regularity only priorizes the Conflict rule application.

  \medskip
  \noindent \emph{(Case $D = C\cdot \sigma$)} If $D = C\cdot \sigma$, then there is an active conflict which needs to be resolved. In this case, one of the rules Resolve, Skip, Factorize or Backtrack are applicable.

  First, consider the case of $\Gamma = \varepsilon$. By soundness, $C\sigma$ must be false under $\Gamma$. However, the only false clause under $\varepsilon$ is $\bot$. In this case, $D = \bot$ and by soundness, $N \models \bot$. Hence, $N$ is unsatisfiable.
  In the other case, there is at least one literal on the trail. We split $\Gamma = \Gamma', L$ and distinguish the shape of $L$:
  \begin{itemize}
    \item Consider the case that $L$ was propagated, i.e. is of shape $L^{C\cdot \delta}$ for a clause $C \in N \cup U$.
    Then, either Resolve or Skip are applicable. In the case that $\comp(L)$ occurs in $C\sigma$, Resolve is applicable. If $\comp(L) \not\in C\sigma$, Skip is applicable.
    \item Consider the case that $L$ is a decision literal, i.e. is of shape $L^i$ for a numerical level $i$. Then, one of the rules Skip, Backtrack or Factorize are applicable.

    If $\comp(L)$ does not occur in $C\sigma$, then Skip can be applied. Backtrack can be applied in all other cases if $C = (C' \lor \comp(L))$, where $C'$ is of level $i' < k$. Note that for Backtrack there must be a level $j$ that is backtracked to. This level $j$ always exists if all other preconditions are met. Hence, if Skip is not applicable, $C$ is of the shape $C' \lor \comp(L)$. If $C'$ is of level $k$, then Factorize can be applied instead, as $C'$ must contain another instance of $\comp(L)$. Otherwise, $C'$ is of level $i' < k$ and Backtrack can be applied.
  \end{itemize}

\end{extendedonly}

\begin{notinextended}
  For a state  $(\Gamma;N;U;\beta;k;D)$ where
  $D\not\in\{\top, \bot\}$, one of the rules Resolve, Skip, Factorize or Backtrack
  is applicable. If the top level literal is a propagated literal
  then either Resolve or Skip are applicable. If the top level
  literal is a decision then one of the rules Skip, Backtrack, or Factorize is applicable.
  In the case $D=\top$ and Decide is not applicable by regularity, Propagate can always be applied instead.
  If $D=\top$ and all Propagate, Decide, and
  Conflict are not applicable it means that there are no undefined ground literals $L \prec_B \beta$
  in $\Gamma$, so $\Gamma\models \mGnd(N)^{\prec_B \beta}$.
\end{notinextended}

\end{proof}


\begin{lemma}[Resolve in regular runs]\label{lemma:resolve_in_regular}
  Consider the derivation of a conflict state \newline$(\Gamma, L;N;U;\beta;k;\top) \Rightarrow_{\text{Conflict}} (\Gamma, L;N;U;\beta;k;D)$.
  In a regular run, during conflict resolution $L$ is not a decision literal and at least the literal $L$ is resolved.
\end{lemma}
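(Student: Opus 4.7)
The plan is to establish the two claims separately, both exploiting what regularity and reasonableness forbid about the state just before the conflict.

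First I would argue that $L$ is not a decision literal. In a regular run, Conflict takes precedence over all other rules, so at state $(\Gamma;N;U;\beta;k;\top)$ (immediately before $L$ was placed on the trail) Conflict was \emph{not} applicable; otherwise the run would have taken the Conflict step instead of whatever rule added $L$. If $L$ were a decision literal, then the step adding $L$ would have been Decide, and the very next step is Conflict. This is exactly the situation forbidden by Definition~\ref{defi:reasonable} (Decide immediately enabling Conflict). Hence $L$ must have been added by Propagate, so $L$ is annotated with a closure $(C\lor L)\cdot\delta$ rather than a level.

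Second I would show that $\comp(L)$ occurs in $D\sigma$. The conflict clause $D\sigma$ is false in $\Gamma,L$ by the precondition of Conflict. If $\comp(L)\notin D\sigma$, then every literal of $D\sigma$ would already be false in $\Gamma$ alone, so Conflict would have been applicable at the earlier state $(\Gamma;N;U;\beta;k;\top)$; this again contradicts regularity. Consequently $\comp(L)\in D\sigma$.

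It remains to argue that conflict resolution actually resolves against $L$. Since $L$ is propagated rather than a decision, the rule Backtrack is not applicable at the top of the trail (Backtrack requires the removed literal to be annotated with the current level $k$ as a decision). Skip is not applicable either, because its precondition $\comp(L)\notin D\sigma$ fails by the previous paragraph. The only remaining rules that do not touch the trail are Factorize and Resolve; Factorize only unifies two literals in the conflict clause and so keeps the ground instance $\comp(L)$ present in the conflict closure (since $(D\lor L_1)\eta\sigma$ contains $L_1\sigma$, which equals the merged ground literal). Therefore Skip remains blocked under any sequence of Factorize applications, and the only rule eligible to remove $L$ from the trail is Resolve. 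Hence at least $L$ must be resolved during conflict resolution. The main obstacle I expect is the Factorize bookkeeping — carefully checking that Factorize cannot turn the conflict clause into one where the ground occurrence of $\comp(L)$ disappears — but this follows directly from the form of the rule and the identity $L_1\eta\sigma=L_1\sigma$.
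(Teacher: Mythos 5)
Your second and third steps (showing $\comp(L)$ must occur in the conflict instance, and that Factorize preserves this occurrence so that Skip and Backtrack stay blocked until Resolve fires on $L$) are essentially the paper's argument and are fine. The gap is in your first step. You argue: if $L$ were a decision literal, then the step adding $L$ was Decide and ``the very next step is Conflict'', contradicting reasonableness. But the rule applied immediately before the Conflict step need not be the rule that placed $L$ on the trail. A $\top$-state with trail $\Gamma,L$ can also be produced by \emph{Backtrack}: the run may have decided $L$ long ago, explored further, learned a clause, and jumped back to the prefix $\Gamma,L$ with an enlarged set $U$. In that scenario the step preceding Conflict is Backtrack, Definition~\ref{defi:reasonable} says nothing (it only constrains Decide), and your argument does not exclude $L$ being a decision literal. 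The same oversight infects your second step: you appeal to Conflict having been applicable ``at the earlier state $(\Gamma;N;U;\beta;k;\top)$'', but if the predecessor was a Backtrack step, no state with trail $\Gamma$ and the \emph{current} $U$ ever occurred in the run.

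What is missing is the argument that in a regular run Conflict can never immediately follow Backtrack. The paper supplies it: by the minimality condition in Backtrack, the freshly learned clause has no false ground instance under the backtracked trail $\Gamma,L$; and no other clause of $N\cup U$ can be false under $\Gamma,L$ either, since such a clause was either already present the last time the run had trail $\Gamma,L$ (then Conflict, having precedence, would have fired there and the run would never have extended that prefix), or it was learned later by a Backtrack landing on a prefix of $\Gamma,L$, contradicting the just-stated property of Backtrack. Only after excluding both Decide (by reasonableness) and Backtrack (by this argument) may you conclude that the predecessor rule was Propagate, so that $L$ is a propagated literal and the predecessor state really is $(\Gamma;N;U;\beta;k;\top)$, which is what your remaining steps need.
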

\begin{proof}
  There are three ways how a reasonable run can reach a state $(\Gamma, L;N;U;\beta;k;\top)$: either by an application of the rule Decide, Backtrack, or Propagate.
  In a reasonable run, if the rule Decide has produced the SCL state $(\Gamma, L;N;U;\beta;k;\top)$, the rule Conflict is not immediately applicable.
  In case the rule Backtrack produced the state  $(\Gamma, L;N;U;\beta;k;\top)$ there is the sequence of rule applications
  \begin{align*}
    &~(\Gamma, L, L',\Gamma_1,K^{k+1},\Gamma_2,\comp(L''\sigma)^{k'};N;U';\beta;k';(D\lor L'')\cdot\sigma)
    \\ \Rightarrow_\text{SCL}^\text{Backtrack} &~(\Gamma, L;N;U'\cup(D\lor L'');\beta;k;\top)
  \end{align*}
  Then, by the definition of Backtrack, the newly learned clause $(D\lor L'')$ cannot be false with respect to $\Gamma, L$.
  This also means that any clause in $N \cup U'$ can also not be false with respect to $\Gamma, L$:
  (i)~either it would have been present the last time when the run visited a state with trail $\Gamma, L$ and in a reasonable run this means that the rule Conflict should have been applied at that point and prevented any further exploration of the trail prefix $\Gamma, L$ or (ii)~it was learned afterwards via the rule Backtrack in a state with the trail prefix $\Gamma, L$; however, this would contradict the previously proven fact that Backtrack always jumps to a trail prefix where the newly learned clause cannot be false anymore.
  Hence, in a reasonable run the rules Decide and Backtrack are never applied right before the rule Conflict.
  In summary, $L$ must be a propagated literal if Conflict is applicable to $(\Gamma, L;N;U;\beta;k;\top)$ from a regular run.

  Backtrack is not directly applicable to $(\Gamma, L;N;U;\beta;k;D)$, as it requires $L$ to be a decision literal.
  Furthermore, $L$ must occur in the conflict clause $D$. Otherwise, Conflict could have been applied
  earlier to $(\Gamma;N;U;\beta;k;\top)$, contradicting regularity. Hence, Skip is not applicable to our state.
  Overall, only Factorize and Resolve can possibly be applied to our state. After an application of Factorize, the two invariants still hold: First, the trail is not modified. Second, $L$ must still occur in the conflict clause $D$, as Factorize cannot remove all instances of $L$ from $D$. Hence, Factorize cannot enable any of the rules Skip or Backtrack. Following from that, at least one application of Resolve must take place in conflict resolution.
\end{proof}

\begin{definition}[State Induced Ordering] \label{def:state_ordering}
  Let $(L_1, L_2,\ldots,L_n;N;U;\beta;k;D)$ be a sound state of {\SL}. The trail
  induces a total well-founded strict order on the defined literals by\newline
\centerline{$L_1\prec_\Gamma\comp(L_1)\prec_\Gamma L_2\prec_\Gamma\comp(L_2)\prec_\Gamma\cdots\prec_\Gamma L_n\prec_\Gamma\comp(L_n).$}
  We extend $\prec_\Gamma$ to a strict total order on all literals where
  all undefined literals are larger than $\comp(L_n)$.
  We also extend $\prec_\Gamma$ to a strict total order on ground clauses by
  multiset extension and also on multisets of ground clauses and overload $\prec_\Gamma$
  for all these cases.
  With $\preceq_\Gamma$ we denote  the reflexive closure of $\prec_\Gamma$.
\end{definition}

\begin{theorem}[Learned Clauses in Regular Runs]\label{theo:SnnRed}
  Let $(\Gamma;N;U;\beta;k;C_0\cdot\sigma_0)$ be the state
  resulting from the application of Conflict in
  a regular run and let $C$ be
  the clause learned at the end of the conflict resolution,
  then $C$ is not redundant with respect to $N\cup U$ and $\prec_\Gamma$.
\end{theorem}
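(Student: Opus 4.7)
The plan is to exhibit the ground instance $C\sigma$ of the learned clause $C$, where $\sigma$ is the substitution of the final conflict closure $C\cdot\sigma$ just before Backtrack fires, and prove that $C\sigma$ is not redundant with respect to $\mGnd(N\cup U)$ and $\prec_\Gamma$; by Definition~\ref{prelim:def:redundancy} this suffices for the non-ground statement. By soundness (Theorem~\ref{theo:scl:soundness}, item~5), $C\sigma$ is false under the trail at the time Backtrack fires, which is a prefix of $\Gamma$, and hence $C\sigma$ is also false under $\Gamma$ itself. The technical core of the proof is to strengthen this to falsity under $\Gamma'$, the trail obtained by removing the top literal $L_\top$ of $\Gamma$.

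For the strengthening, I will apply Lemma~\ref{lemma:resolve_in_regular}: in a regular run $L_\top$ is a propagated literal and Resolve is applied on $L_\top$ at least once. Since the run is regular, Conflict was not applicable at the immediately preceding state $(\Gamma'; N; U; \beta; k; \top)$, so the initial conflict clause $C_0\sigma_0$ was not false under $\Gamma'$; combined with $C_0\sigma_0$ false under $\Gamma$ this forces $\comp(L_\top) \in C_0\sigma_0$ and all other literals of $C_0\sigma_0$ to lie in $\comp(\Gamma')$. I then track the literal set of the conflict throughout conflict resolution: each Resolve against the clause that propagated $L_\top$ uses a clause whose remaining part was false under $\Gamma'$, so it removes one occurrence of $\comp(L_\top)$ and contributes only literals from $\comp(\Gamma')$; each later Resolve on a trail literal $L_i \in \Gamma'$ contributes only literals from $\comp$ of the trail prefix strictly before $L_i$; Factorize introduces no new literals. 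Finally, Skip on the propagated literal $L_\top$ is only applicable once $\comp(L_\top)$ is absent from the conflict, so by the time Backtrack fires every occurrence of $\comp(L_\top)$ has been eliminated. Thus $C\sigma \subseteq \comp(\Gamma')$, so $C\sigma$ is false under $\Gamma'$, and its $\prec_\Gamma$-maximum literal is strictly below $L_\top$ and defined in $\Gamma'$.

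The non-redundancy proof then proceeds by contradiction. Suppose $C\sigma$ is redundant; then either $C\sigma \in \mGnd(N\cup U)$, or $\mGnd(N\cup U)^{\prec_\Gamma C\sigma} \models C\sigma$. Because only the single Propagate step of $L_\top$ separates $(\Gamma'; N; U; \beta; k; \top)$ from the Conflict state of the theorem, the set $N\cup U$ is identical at both states. In the first case, writing $C\sigma = F\mu$ with $F\in N\cup U$, the rule Conflict with $F$ and $\mu$ would already have been applicable at $(\Gamma'; N; U; \beta; k; \top)$, contradicting Conflict having precedence over Propagate in a regular run. In the second case, every clause of $\mGnd(N\cup U)^{\prec_\Gamma C\sigma}$ has all its literals $\preceq_\Gamma$ the max of $C\sigma$, hence defined in $\Gamma'$; since $C\sigma$ is false under $\Gamma'$, any total extension of $\Gamma'$ falsifies $C\sigma$, so the entailment forces some clause $E$ in that set to be false in that extension, and by full definedness $E$ is already false under $\Gamma'$. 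Writing $E = F'\mu'$ with $F'\in N\cup U$ yields the same regularity violation.

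The main obstacle I expect is the bookkeeping in the structural step: proving that $\comp(L_\top)$ cannot survive in the conflict beyond the Skip of $L_\top$ and that no Resolve or Factorize step ever introduces a literal above the trail literal immediately below $L_\top$ in $\prec_\Gamma$. This requires a careful induction over the conflict-resolution sequence, combining the side conditions of Propagate (which constrain the literals of any clause used for resolving against a trail literal) with the observation that Skip on a propagated literal is only applicable once its complement has been eliminated from the conflict. Once that structural claim is in place, the closing regularity-based case analysis is routine.
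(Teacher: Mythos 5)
Your proposal is correct and follows essentially the same route as the paper's proof: regularity forces the top trail literal to be propagated, Lemma~\ref{lemma:resolve_in_regular} (together with the Skip precondition) guarantees neither it nor its complement survives in the learned instance, so $C\sigma$ is false already under the trail before the last Propagate, and then the $\prec_\Gamma$-ordering pushes any redundancy witness in $\mGnd(N\cup U)$ below that literal as well, yielding a clause that should have triggered Conflict before the Propagate, contradicting Conflict precedence. Your explicit bookkeeping of which literals can enter the conflict clause during Resolve/Factorize, and your treatment of the two redundancy cases via a total extension of $\Gamma'$, are just more detailed renderings of steps the paper states tersely.
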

\begin{proof}
\begin{notinmaster}
    Consider the following fragment of a derivation learning a clause:\newline
   \centerline{$\Rightarrow^{\text{Conflict}}_{\text{SCL}}(\Gamma;N;U;\beta;k;C_0\cdot\sigma_0) \Rightarrow^{\{\text{Skip, Fact., Res.}\}^*}_{\text{SCL}} (\Gamma';N;U;\beta;k;C\cdot\sigma)\Rightarrow^{\text{Backtrack}}_{\text{SCL}}.$}
\end{notinmaster}
 \begin{masteronly}
    Consider the following fragment of a derivation learning a clause:
  \begin{alignat*}{3}
    &\Rightarrow^{\text{Conflict}}_{\text{SCL}} && (\Gamma;N;U;\beta;k;C_0\cdot\sigma_0) &&  \\
    &\Rightarrow^{\{\text{Skip, Fact., Res.}\}^*}_{\text{SCL}} && (\Gamma';N;U;\beta;k;C\cdot\sigma)
    \\ & \Rightarrow^{\text{Backtrack}}_{\text{SCL}} && &&
 \end{alignat*}
\end{masteronly}

  By soundness $N\cup U \models C$ and $C\sigma$
  is false under both $\Gamma$ and $\Gamma'$.
  We prove that $C\sigma$ is non-redundant to $N \cup U$ with respect to $\prec_{\Gamma}$.

  Assume there is an $S \subseteq \mGnd(N\cup U)^{\preceq_{\Gamma} C\sigma}$
  \st $S\models C\sigma$. There must be a clause $D\in S$
  false under $\Gamma$, since all clauses in $S$ have a defined truth value
  (as all undefined literals are greater in $\prec_\Gamma$ than all
  defined literals)
  and if $\Gamma \models S$ then $\Gamma \models C\sigma$ by transitivity of entailment, a contradiction.


  By regularity, $\Gamma$ must be of the shape $\Gamma = \Gamma'', L\delta^{C\lor L\cdot\delta}$, since no application of Decide can lead to an application of the rule Conflict. Thus, the last applied rule must have been Propagate. Furthermore, by Lemma \ref{lemma:resolve_in_regular}, Resolve must have resolved at least the rightmost literal $L\delta$ from $\Gamma$. Thus, $L\delta \not\in C\sigma$ and $\comp(L\delta) \not\in C\sigma$.

  Since $D\prec_{\Gamma}C\sigma$, neither $L\delta$ nor $\comp(L\delta)$ may occur in $D$. However, this is a contradiction, since $D$ is then already false under $\Gamma''$ and, thus, must have been chosen as a Conflict instance earlier in a regular run.
\end{proof}

Of course, in a regular run, the ordering of foreground literals on the trail will change, i.e., the ordering
of Definition~\ref{def:state_ordering} will change as well. Thus the non-redundancy property of Lemma~\ref{theo:SnnRed}
reflects the situation at the time of creation of the learned clause. A non-redundancy property holding for
an overall run must be invariant against changes on the ordering.
However, the ordering of Definition~\ref{def:state_ordering} also entails
a fixed subset ordering $\prec_\subseteq$ that is invariant against changes on the overall ordering.
This means that our dynamic ordering entails non-redundancy criteria based on subset relations including forward subsumption. 
From an implementation perspective,
this means that learned clauses need
not to be tested for forward redundancy. Current resolution or superposition based provers spent a reasonable portion
of their time in testing forward redundancy of newly generated clauses. In addition, also tests for backward reduction can be
restricted knowing that learned clauses are not redundant.

\begin{theorem}[FOL Non-Redundancy is Undecidable] \label{theo:cslfolnonred}
  Deciding non-redundancy of a first-order clause $C$ with respect to a finite first-order clause set $N^{\preceq C}$ is undecidable.
\end{theorem}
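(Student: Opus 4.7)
The plan is to reduce from the classical undecidable problem of satisfiability of a finite first-order clause set. Given a finite clause set $M$, I would introduce a fresh $0$-ary predicate symbol $Q$ not occurring in $M$ and consider the single-literal clause $C = \{Q\}$. The construction will exhibit an ordering and a set $N^{\preceq C}$ such that $C$ is redundant with respect to $N^{\preceq C}$ precisely when $M$ is unsatisfiable.

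For the ordering I choose $\prec$ to be any total, well-founded, strict ordering on ground literals over the combined signature under which $Q$ is strictly greater than every other ground literal; an LPO whose precedence places $Q$ above every predicate of $M$ works uniformly, even when $M$ contains function symbols of positive arity (and hence an infinite Herbrand universe). With this choice, every ground literal in $\mGnd(M)$ is $\prec Q$, and a short computation using the multiset extension (taking $X = \{Q\}$ and $Y$ to be the set of literals of the candidate smaller clause) shows that every ground instance of a clause in $M$ is strictly below $\{Q\}$. Thus $D \prec C$ holds for all $D \in M$, so $N := M$ already satisfies $N^{\preceq C} = M$, and we present $M$ together with the ordering as the instance of the non-redundancy problem.

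It then remains to verify equivalence with unsatisfiability. The only ground instance of $C$ is $\{Q\}$ itself, and because $Q$ is fresh, $\{Q\} \notin \mGnd(M)$. By the previous paragraph $\mGnd(M)^{\prec \{Q\}} = \mGnd(M)$, so by Definition~\ref{prelim:def:redundancy} $C$ is redundant with respect to $M$ iff $\mGnd(M) \models Q$ iff $M \models Q$. Since $Q$ does not occur in $M$, any model of $M$ can be altered to falsify $Q$ without affecting satisfaction of $M$; therefore $M \models Q$ if and only if $M$ has no models. This gives a computable reduction from unsatisfiability to redundancy, and undecidability of non-redundancy follows.

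The main obstacle I anticipate is purely bookkeeping in the ordering: one has to make sure that the chosen $\prec$ is simultaneously total, well-founded and strict on \emph{all} ground literals of the extended signature while still placing $Q$ above the infinitely many ground literals generated by $M$'s function symbols. Handing this off to an LPO with an appropriate precedence resolves the issue cleanly, and the rest of the argument is a direct unfolding of the definitions of $N^{\preceq C}$, multiset extension, and ground redundancy.
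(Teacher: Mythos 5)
Your proof is correct and takes essentially the same route as the paper's: a reduction from unsatisfiability of a finite first-order clause set, using a fresh nullary predicate made maximal by an LPO precedence so that the resulting unit clause is redundant iff the original set is unsatisfiable. Your variant is even slightly cleaner, since by keeping $C$ out of the reference set you avoid the degenerate $C \in N$ case of Definition~\ref{prelim:def:redundancy} that the paper's formulation with $N \cup \{P\}$ glosses over.
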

\begin{proof}
  We reduce the problem to unsatisfiability of an arbitrary first-order clause set $N$.
  Let $N = \{C_1,\ldots,C_n\}$ be an arbitrary, finite first-order clause set. We consider an LPO ordering $\prec_{\text{LPO}}$.
  Next, we add a fresh predicate $P$ of arity zero, where $P$ is $\prec_{\text{LPO}}$ larger than any clause in $N$. Now, in the finite first-order clause set $N \cup \{P\}$ the clause $P$
  is redundant iff $N$ is unsatisfiable.
\end{proof}

Another aspect of our definition of regular runs that it does not require exhaustive propagation.
This is important for examples where propagation may already refute a problem, but particular learning
steps provide an exponentially shorter proof.

\begin{example}[Delayed Propagation]
  Consider the clause set
  \[\begin{array}{rl}
  N = \{(1) & P(0,0,0,0)\\
(2) & \lnot P(x_1,x_2,x_3,0) \lor P(x_1,x_2,x_3,1)\\
(3) & \lnot P(x_1,x_2,0,1) \lor P(x_1,x_2,1,0)\\
(4) & \lnot P(x_1,0,1,1) \lor P(x_1,1,0,0)\\
(5) & \lnot P(0,1,1,1) \lor P(1,0,0,0)\\
  (6) & \lnot P(1,1,1,1)\}\\
  \end{array}\]
  where exhaustive propagation enumerates the four bit counter from $P(0,0,0,0)$ to $P(1,1,1,1)$ and
  this way refutes the clause set. A proof of quadratic length~\cite{PerezVoronkov08} can be obtained as follows
  \[
  \renewcommand{\arraystretch}{1.3}
   \begin{array}[]{ll}
   & (\varepsilon; N; \emptyset; \beta; 0; \top) \\
   \mrulename{Propagate}     & (P(0,0,0,0)^{(1) \cdot \{\}}; N; \emptyset; \beta; 0; \top) \\
   \mrulename{Decide   }     & (P(0,0,0,0)^{(1)\cdot \{\}}, \lnot P(0,0,1,0)^1; N; \emptyset; \beta; 1; \top) \\
   \mrulename{Propagate}     & (P(0,0,0,0)^{(1)\cdot \{\}}, \lnot P(0,0,1,0)^1, \lnot P(0,0,0,1)^{(3)\cdot\{x_1\mapsto 0, x_2\mapsto 0\}}; N; \emptyset; \beta; 1; \top) \\
   \mrulename{Conflict}      & (P(0,0,0,0)^{(1)\cdot \{\}}, \lnot P(0,0,1,0)^1, \lnot P(0,0,0,1)^{(3)\cdot\{x_1\mapsto 0, x_2\mapsto 0\}}; \\
   & \quad N; \emptyset; \beta; 1; (2) \cdot \{x_1\mapsto 0, x_2\mapsto 0, x_3\mapsto 0\}) \\
   \mrulename{Resolve,Skip}    & (P(0,0,0,0)^{(1)\cdot \{\}}, \lnot P(0,0,1,0)^1;\\
   & \quad N; \emptyset; \beta; 1; \lnot P(x_1,x_2,0,0) \lor P(x_1,x_2,1,0) \cdot \{x_1\mapsto 0, x_2\mapsto 0\}) \\
   \mrulename{Backtrack}       & (P(0,0,0,0)^{(1)\cdot \{\}}; N\cup\{(7)~\lnot P(x_1,x_2,0,0) \lor P(x_1,x_2,1,0)\};  \emptyset; \beta; 0;   \top) \\
  \end{array}
  \]
  where we accelerated the counter on the first to bit. Using the same technique we can eventually learn the clause $\lnot P(x_1,0,0,0) \lor P(x_1,1,1,1)$
  accelerating the first three bits. The technique requires lazy propagation. In order to obtain a resolution step between two of the two literal
  clauses we decide the positive literal of the clause where we want to resolve with the negative literal and we make sure the positive literal we want to resolve
  with triggered a propagation. This is supported by a regular run.
\end{example}

\begin{theorem}[BS Non-Redundancy is NEXPTIME-complete] \label{theo:cslnonrednexptime}
  Deciding non-redundancy of a BS clause $C$ with respect to a finite BS clause set $N^{\preceq C}$ is NEXPTIME-complete.
\end{theorem}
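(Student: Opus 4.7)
The plan is to establish NEXPTIME-membership and NEXPTIME-hardness separately, using the fact that BS-satisfiability is itself NEXPTIME-complete.

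For NEXPTIME-hardness, I would reuse the construction from Theorem~\ref{theo:cslfolnonred}, but now reducing from BS-satisfiability rather than first-order unsatisfiability. Given a finite BS clause set $N$, extend the signature by a fresh zero-ary predicate $P$ and fix $\prec$ to be an LPO whose precedence makes $P$ larger than every symbol appearing in $N$. Then $P \notin N$ and $N^{\preceq P} = N$. Moreover, since $P$ is fresh, $N \models P$ iff $N$ is unsatisfiable (any model of $N$ can be extended to one falsifying $P$). Hence $P$ is non-redundant with respect to $N^{\preceq P}$ iff $N$ is satisfiable. The reduction is polynomial and stays within BS, yielding the lower bound.

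For NEXPTIME-membership, I would give a nondeterministic exponential-time procedure. Unfolding Definition~\ref{prelim:def:redundancy}, $C$ is non-redundant with respect to $N$ iff there exist a grounding $\sigma$ of $C$ with $C\sigma \notin \mGnd(N)$ and $\mGnd(N)^{\prec C\sigma} \not\models C\sigma$. This is equivalent to the existence of such a $\sigma$ together with a Herbrand interpretation $\mathcal{I}$ over the BS Herbrand universe satisfying $\mathcal{I} \not\models C\sigma$, while $\mathcal{I} \models D\tau$ for every clause $D \in N$ and every grounding $\tau$ with $D\tau \prec C\sigma$. The algorithm nondeterministically guesses $\sigma$ (polynomial size, since $C$ has polynomially many variables and the signature has polynomially many constants) and $\mathcal{I}$ as a subset of the Herbrand base. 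Because BS has a finite constant signature, the Herbrand base contains at most $|H|^{\text{maxArity}}$ ground atoms, i.e., exponentially many, so $\mathcal{I}$ has exponential-size description. It then verifies $\mathcal{I} \not\models C\sigma$ in polynomial time and iterates over the exponentially many pairs $(D,\tau)$ with $D \in N$, checking in polynomial time per pair whether $D\tau \prec C\sigma$ and, if so, whether $\mathcal{I} \models D\tau$; the disequality $C\sigma \notin \mGnd(N)$ is checked along the way. The overall running time is exponential, placing the problem in NEXPTIME.

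The main obstacle is the upper bound: one must ensure that the various exponential blowups (the Herbrand base, the interpretation certificate, and the number of ground instances of clauses in $N$) compose to a single exponential, and that each local operation remains polynomial. This requires the ordering $\prec$ to be polynomial-time computable on ground clauses, which holds for standard choices such as LPO or KBO without zero-weight symbols, the class of orderings considered throughout the paper. Given this, the two bounds match and the problem is NEXPTIME-complete.
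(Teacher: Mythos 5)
Your proposal is correct and follows essentially the same route as the paper: the hardness part is the paper's exact construction (a fresh zero-ary predicate $P$ made $\prec_{\text{LPO}}$-maximal, so that $P$ is non-redundant with respect to the clauses below it iff the given BS set $N$ is satisfiable), with your variant even slightly cleaner in keeping $P$ out of the reference set $N^{\preceq P}$. The membership argument you spell out (guess a grounding $\sigma$ and an exponential-size Herbrand interpretation, then check all ground instances $D\tau \prec C\sigma$) is exactly what the paper dismisses as obvious, so there is no substantive difference in approach.
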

\begin{proof}
  We only show hardness, because containment of the problem in NEXPTIME is obvious.
  To this end, let $N = \{C_1,\ldots,C_n\}$ be an arbitrary, finite BS clause set. We consider an LPO ordering $\prec_{\text{LPO}}$.
  Next, we add a fresh predicate $P$ of arity zero, where $P$ is $\prec_{\text{LPO}}$ larger than any clause in $N$. Now, in the finite BS clause set $N \cup \{P\}$ the clause $P$ is redundant iff $N$ is unsatisfiable.
\end{proof}

\begin{theorem}[Termination] \label{theo:finite-termination}
 Any regular run of $\Rightarrow_{\mSL}$ terminates. 
\end{theorem}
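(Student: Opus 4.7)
The plan is to bound three quantities: the trail length in any reachable state, the number of rule applications inside a single conflict-resolution phase, and the total number of Backtrack applications during a run.

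First, I would observe that $\beta$ is fixed throughout a regular run, since none of the SCL rules modifies it (growing $\beta$ would be the task of a separate Grow rule). The defining property of $\prec_B$ makes $M := \{L \mid L \text{ ground literal with } L \prec_B \beta\}$ finite, and the soundness invariants of Theorem~\ref{theo:scl:soundness}, items 1 and 6, ensure that in every reachable state the trail is a consistent subsequence of $M$. Hence $|\Gamma| \le |M|$ throughout the run.

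Second, I would show that each conflict-resolution phase is finite using the lexicographic measure $(|\Gamma|, P)$, where $P$ is the multiset of trail positions whose complementary literal occurs in the current conflict closure, compared by the multiset extension of the position order. Skip strictly decreases $|\Gamma|$; Factorize leaves $|\Gamma|$ unchanged but strictly decreases $P$ by eliminating one occurrence; Resolve leaves $|\Gamma|$ unchanged and strictly decreases $P$ because the resolved-upon literal $L'\sigma = \comp(L\delta)$ contributes the topmost trail position, while the literals added from the propagating clause $C\delta$ are false under the sub-trail strictly preceding $L\delta$ and hence contribute only strictly earlier trail positions. Each conflict-resolution phase therefore terminates, either at Backtrack or at a state with $D = \bot$.

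Third, I would invoke Theorem~\ref{theo:SnnRed}: each Backtrack learns a clause $C$ whose ground instance $C\sigma$ at Backtrack time is non-redundant with respect to $\mGnd(N \cup U)$ and $\prec_\Gamma$. Because $C\sigma$ is false under the current trail, every literal of $C\sigma$ lies in the finite set $\{\comp(L) \mid L \in M\}$. Moreover, no earlier ground instance $D \in \mGnd(N \cup U)$ can satisfy $D \subseteq C\sigma$ as a multiset: such a $D$ would fulfil $D \preceq_\Gamma C\sigma$ in the multiset extension and $D \models C\sigma$ by subsumption, contradicting non-redundancy. The sequence of successive Backtrack instances therefore forms a bad sequence in the sub-multiset order on ground clauses over $\{\comp(L) \mid L \in M\}$; since multisets over a finite set are well-quasi-ordered by Dickson's Lemma, no such infinite bad sequence exists, and only finitely many Backtrack applications can occur.

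Combining the three steps, between any two consecutive Backtracks the only remaining rules are Propagate, Decide, and Conflict; Propagate and Decide each extend the trail by one literal and hence fire at most $|M|$ times before Conflict must be taken or the run gets stuck, after which the second step bounds the resolution portion. Finiteness of Backtracks then forces termination of the entire regular run. The main obstacle I expect is the third step: lifting the state-local non-redundancy statement of Theorem~\ref{theo:SnnRed} to a global well-foundedness argument on the growing set $U$, which is precisely what the sub-multiset well-quasi-order on ground clauses over $M$'s complements provides.
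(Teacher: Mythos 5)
Your proof is correct and takes essentially the same route as the paper's: the core is that learned clauses are non-redundant (Theorem~\ref{theo:SnnRed}), that this non-redundancy is stable under the subset ordering, and that the $\prec_B \beta$ bound then permits only finitely many learnings, while bounded trails and terminating conflict resolution give finitely many steps between learnings. The paper compresses the latter into the unproved claim that any infinite run learns infinitely many clauses and simply asserts finiteness of the non-redundant ground clauses below $\beta$, whereas you make both precise (a lexicographic measure for the resolution phase, Dickson's lemma for the sub-multiset antichain of learned ground instances, which correctly handles duplicated literals); this is a sharpening of the same argument rather than a different one.
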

\begin{proof}
  Any infinite run learns infinitely many clauses.
  Firstly, for a regular run, by Theorem~\ref{theo:SnnRed},
  all learned clauses are non-redundant. Those clauses are also non-redundant under the fixed subset ordering $\prec_\subseteq$, which is well-founded.
  Due to the restriction of all clauses to be smaller than $\{\beta\}$, the overall number of non-redundant ground clauses is finite. So there is no infinite regular run.
 \end{proof}

\begin{theorem}[SCL Refutational Completeness] \label{theo:scl:refcomplete}
  If $N$ is unsatisfiable, such that some finite $N'\subseteq\mGnd(N)$ is unsatisfiable and $\beta$ is $\prec_B$ larger than all literals in $N'$ then any regular run
  from $(\epsilon; N; \emptyset; \beta; 0; \top)$ of SCL derives $\bot$.
\end{theorem}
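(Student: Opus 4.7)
The plan is to combine termination of regular runs with the correct termination theorem, and then use the hypothesis on $\beta$ to exclude the satisfiable-model case.

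First, by Theorem~\ref{theo:finite-termination}, any regular run starting from $(\epsilon;N;\emptyset;\beta;0;\top)$ terminates in a state $(\Gamma;N;U;\beta;k;D)$ in which no SCL rule is applicable. Then Theorem~\ref{theo:scl:corterm} leaves exactly two possibilities for this terminal state: either $D = \bot$, in which case $\bot$ has been derived as claimed, or $D = \top$ together with $\Gamma \models \mGnd(N)^{\prec_B \beta}$ so that $\mGnd(N)^{\prec_B \beta}$ is satisfiable.

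The main work is to rule out the second alternative using the assumption on $\beta$. By hypothesis, $N' \subseteq \mGnd(N)$ is unsatisfiable and $\beta$ is strictly $\prec_B$-larger than every literal occurring in $N'$. By the definition of $\mGndB$ as the restriction of ground instances to literals $L \prec_B \beta$, this means $N' \subseteq \mGnd(N)^{\prec_B \beta}$. Since $N'$ is unsatisfiable, any superset of it is unsatisfiable as well, so $\mGnd(N)^{\prec_B \beta}$ is unsatisfiable, contradicting the conclusion $\Gamma \models \mGnd(N)^{\prec_B \beta}$ of the second case.

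Therefore only the first alternative of Theorem~\ref{theo:scl:corterm} remains, i.e.\ the terminal state carries $D = \bot$, which is exactly the statement that the regular run derives $\bot$. The main subtlety is the reliance on $\beta$ being chosen $\prec_B$-above all literals of the unsatisfiable finite ground subset $N'$; this is precisely the hypothesis of the theorem and is what guarantees that the bounded search space $\mGnd(N)^{\prec_B \beta}$ already contains an unsatisfiable core, so that SCL cannot terminate by building a model on the trail and must instead produce a conflict that propagates all the way down to $\bot$.
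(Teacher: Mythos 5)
Your proof is correct and follows exactly the paper's route: the paper's own proof is simply the combination of Theorem~\ref{theo:finite-termination} with Theorem~\ref{theo:scl:corterm}, and you have merely spelled out the (intended) final step that $N'\subseteq\mGnd(N)^{\prec_B\beta}$ rules out the satisfiable case.
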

\begin{proof}
  By Theorem~\ref{theo:finite-termination} and Theorem~\ref{theo:scl:corterm}.
\end{proof}

Of course, for a given, unsatisfiable clause set $N$, a sufficiently large $\beta$ cannot be effectively computed in advance, in general.
Then the calculus is extended with a different rule that increases $\beta$. The below rule grow has to be applied in a fair way together
with the other SCL calculus resulting then in a semi-decision procedure.

\bigskip
\shortrules{Grow}
{$(\Gamma;N;U;\beta;k;\top)$}
{$(\epsilon;N;U;\beta';0;\top)$}
{provided $\beta \prec_B \beta'$}{\mSL}{12}

\begin{example}
  For example, consider a SCL run on the following clause set:
  \begin{center}
    $N = \left\{\begin{array}{rl}
      (1) & \neg P(x) \lor P(g(x)) \\
      (2) & P(a)\\
      (3) & \neg P(g(g(a)))\\
    \end{array}\right\}$
    \end{center}

    First, we choose $\prec_B$ as a KBO, where all symbols have weight one and with precedence $a\prec g\prec P$. In the beginning of the run, we set $\beta = P(g(g(a)))$.

  \renewcommand{\arraystretch}{1.2}
  \[
   \begin{array}[]{ll}
   & (\varepsilon; N; \emptyset; \beta; 0; \top) \\
   \mrulename{Propagate}     & (P(a)^{(2) \cdot \{\}}; N; \emptyset; \beta; 0; \top) \\
   \mrulename{Propagate}     & (P(a)^{(2) \cdot \{\}}, P(g(a))^{(1) \cdot \{x \mapsto a\}}; N; \emptyset; \beta; 0; \top) \\
   \mrulename{Grow}          & (\varepsilon; N; \emptyset; \beta' = P(g(g(g(a)))); 0; \top) \\
   \mrulename{Propagate}     & (P(a)^{(2) \cdot \{\}}; N; \emptyset; \beta; 0; \top) \\
   \mrulename{Propagate}     & (P(a)^{(2) \cdot \{\}}, P(g(a))^{(1) \cdot \{x \mapsto a\}}; N; \emptyset; \beta; 0; \top) \\
   \mrulename{Propagate}     & (P(a)^{(2) \cdot \{\}}, P(g(a))^{(1) \cdot \{x \mapsto a\}},  P(g(g(a)))^{(1) \cdot \{x \mapsto g(a)\}}; N; \emptyset; \beta'; 0; \top) \\
   \mrulename{Conflict}      & (P(a)^{(2) \cdot \{\}}, P(g(a))^{(1) \cdot \{x \mapsto a\}},  P(g(g(a)))^{(1) \cdot \{x \mapsto g(a)\}}; N; \emptyset; \beta'; 0; (3) \cdot \{\}) \\
   \mrulename{Resolve,Skip}       & (P(a)^{(2) \cdot \{\}}, P(g(a))^{(1) \cdot \{x \mapsto a\}}; N; \emptyset; \beta'; 0; \neg P(g(a)) \cdot \{x \mapsto g(a)\}) \\
   \mrulename{Resolve,Skip}       & (P(a)^{(2) \cdot \{\}}; N; \emptyset; \beta'; 0; \neg P(a) \cdot \{x \mapsto g(a), x' \mapsto a\}) \\
   \mrulename{Resolve}       & (P(a)^{(2) \cdot \{\}}; N; \emptyset; \beta'; 0; \bot \cdot \{x \mapsto g(a), x' \mapsto a\}) \\
  \end{array}
  \]

  Note that after the second Propagate, no other rule except Grow is applicable to our state. In particular, note that after this step, $ \Gamma = \{ P(a), P(g(a)) \} \models \mGndB(N)$. Hence, no further literal can be added to the trail and no conflict can be detected. Thus, the only option here is to use Grow to consider more literals. Only with these additional literals, a refutation is eventually found.
\end{example}

\begin{theorem}[SCL decides the BS fragment] \label{theo:scl:decidesbs}
  SCL restricted to regular runs decides satisfiability of a BS clause set if $\beta$ is set appropriately.
\end{theorem}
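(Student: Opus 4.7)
The plan is to combine termination (Theorem~\ref{theo:finite-termination}), correct termination (Theorem~\ref{theo:scl:corterm}), and refutational completeness (Theorem~\ref{theo:scl:refcomplete}) with the key structural observation that the BS fragment admits a finite Herbrand base. Since BS clauses contain no non-constant function symbols, the set of ground atoms over the signature of $N$ (adding a fresh constant if $N$ contains none) is finite. Hence there exists a ground literal $\beta_N$ that is $\prec_B$-larger than every literal in $\mGnd(N)$, and such a $\beta_N$ is what I mean by ``setting $\beta$ appropriately''. Fix this choice for the rest of the argument.

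First I would verify termination: by Theorem~\ref{theo:finite-termination}, any regular run of $\Rightarrow_{\mSL}$ starting from the initial state $(\epsilon;N;\emptyset;\beta_N;0;\top)$ terminates. Next I would apply Theorem~\ref{theo:scl:corterm} to the terminal state $(\Gamma;N;U;\beta_N;k;D)$. It yields two cases. In the first, $D = \bot$, so $N$ is unsatisfiable. In the second, $D=\top$ and $\Gamma \models \mGnd(N)^{\prec_B \beta_N}$. Here I exploit the choice of $\beta_N$: every literal in $\mGnd(N)$ is $\prec_B$-smaller than $\beta_N$, so $\mGnd(N)^{\prec_B \beta_N} = \mGnd(N)$. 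Consequently $\Gamma$ induces a (partial) Herbrand model of $\mGnd(N)$, which witnesses satisfiability of $N$.

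Conversely, I need to show that the calculus does not report ``satisfiable'' on an unsatisfiable input. If $N$ is unsatisfiable, then because $\mGnd(N)$ is finite in BS, $\mGnd(N)$ itself is a finite unsatisfiable ground subset $N' \subseteq \mGnd(N)$. Since $\beta_N$ was chosen $\prec_B$-larger than every literal in $\mGnd(N)$, in particular larger than every literal in $N'$, Theorem~\ref{theo:scl:refcomplete} applies and guarantees that any regular run derives $\bot$. Together with the previous case analysis, the terminal state thus correctly reports unsatisfiability precisely when $N$ is unsatisfiable.

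The main obstacle is purely technical rather than conceptual: one must justify that $\beta_N$ exists and is finite by a clean appeal to the BS signature being function-symbol-free. A minor subtlety is the edge case where $N$ has no constants, in which case a fresh constant must be added before extracting $\beta_N$; this is standard for BS since adding a fresh constant does not affect satisfiability of BS clause sets. Apart from this, the proof is a direct corollary of the three theorems quoted above.
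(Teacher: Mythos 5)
Your proposal is correct and follows essentially the same route as the paper: choose $\beta$ to be $\prec_B$-larger than all ground instances of $N$ over its (finite) constant set, then conclude via termination (Theorem~\ref{theo:finite-termination}) together with Correct Termination (Theorem~\ref{theo:scl:corterm}) and Refutational Completeness (Theorem~\ref{theo:scl:refcomplete}). The paper's own proof only cites termination explicitly and leaves the appeal to Theorems~\ref{theo:scl:corterm} and~\ref{theo:scl:refcomplete} (and the identity $\mGnd(N)^{\prec_B \beta} = \mGnd(N)$ for this choice of $\beta$) implicit, so your write-up just makes those steps explicit.
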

\begin{proof}
  Let $B$ be the set of constants in the BS clause set $N$. Then define $\prec_B$ and $\beta$ such that $L \sigma \prec_B \beta$ for all
  $L\in N$ and for all groundings $\sigma$ of $L$ with $\cdom(\sigma) \subseteq B$.
  Following the proof of Theorem~\ref{theo:finite-termination}, any SCL regular run will terminate on a BS clause set.
\end{proof}


\section{Conclusion} \label{sec:conclusion}

The main contributions of this paper revisiting the SCL calculus for first-order are:
(i)~the incorporation of the regularity definition of \cite{BrombergerEtAl2020arxiv} to classical SCL. This enables SCL to guarantee non-redundant learning even without requiring exhaustive propagation,
(ii)~the introduction of a new trail bounding measure, which guarantees termination by limiting the size of considered literals. Still, for proper bounds, SCL remains a decision procedure for
any logic that enjoys the finite model property.
Moreover, this paper contains formal, rigorous proofs for the SCL calculus, including a full soundness proof which was omitted before \cite{FioriWeidenbach19,BrombergerEtAl2020arxiv}.
Moreover, the proofs of the original paper have been adapted and, in some cases, simplified.

This paper, hence, forms part of an overall development of the SCL calculus.
The overall goal of this process is to provide an efficient, powerful algorithm for proving first-order formulae or providing counterexamples.

%

\smallskip\noindent

\bibliographystyle{splncs04}


\end{document}